\providecommand{\qedsymbol}{$\square$}
\newcommand{\mathqed}{\quad\hbox{\qedsymbol}}
\DeclareRobustCommand{\qedhere}{%
  \ifmmode \mathqed
  \else
    \leavevmode\unskip\penalty9999 \hbox{}\nobreak\hfill
    \quad\hbox{\qedsymbol}%
  \fi
}
\newcommand{\QL}{\mathsf{QL}}
\newcommand{\Proj}{\mathsf{Proj}}
\newcommand{\D}{\mathbb{D}^{m}} 
\newcommand{\PO}{\mathbb{P}^{m}} 
\newcommand{\Dn}{\mathbb{D}^{n}} 
\newcommand{\DP}{\mathbb{D}^{m-1}} 
\newcommand{\CH}{\mathbb{C}_h^{n\times n}} 
\newcommand{\CL}{\mathbb{C}_h^{mn\times mn}} 
\newcommand{\CP}{\mathbb{C}_h^{m\times m}} 
\newcommand{\diag}{\text{Diag}}
\newcommand{\marg}{{\mathsf{marg}}}
\newcommand{\pspace}{\varOmega}
\newcommand{\reals}{\mathbb{R}}
\newcommand{\gambles}{\mathcal{H}}
\newcommand{\domain}{\mathcal{K}}
\newcommand{\rdesirs}{\mathcal{K}}
\newcommand{\cone}{\mathcal{C}}
\renewcommand{\epsilon}{\varepsilon}
\newcommand{\eps}{\epsilon}
\renewcommand{\iff}{\Leftrightarrow}
\begin{document}

%

\title{Quantum rational preferences and desirability}

\author{\name Alessio Benavoli\thanks{The authors are listed in alphabetic order.}  \email alessio@idsia.ch \\
       \name Alessandro Facchini \email alessandro.facchini@idsia.ch \\
       \name Marco Zaffalon \email zaffalon@idsia.ch\\
       \addr Istituto Dalle Molle di Studi sull'Intelligenza Artificiale (IDSIA). \\ 
       Galleria 2, Via Cantonale 2c  \\
6928 Manno, Switzerland}


\editor{}

\maketitle

\begin{abstract}
We develop a theory of quantum rational decision making in the tradition of Anscombe and Aumann's axiomatisation of preferences on horse lotteries. It is essentially the Bayesian decision theory generalised to the space of Hermitian matrices. Among other things, this leads us to give a representation theorem showing that quantum complete rational preferences are obtained by means of expected utility considerations. 
\end{abstract}

\begin{keywords}
Quantum mechanics, Bayesian decision theory, Horse lotteries, Imprecise probability.
\end{keywords}

\section{Introduction}

The aim of this paper is simple. We have recently shown in \citet{benavoli2016d} that Quantum Mechanics (QM) coincides with the Bayesian theory once this is formulated in the space of Hermitian matrices (so as to make possible to gamble on quantum experiments). Such an identification makes a number of things, at least in principle, straightforward: one of these is the extension of QM to make it handle non-linear utility. We do so by adapting the traditional axiomatisation of rational preferences by \cite{anscombe1963} to the quantum setting.

After axiomatising quantum rational preferences in this way, we move on to give a representation theorem that shows that quantum rational preferences can be obtained by means of expected utility considerations. Our route to this results is based on the approach devised by \cite{zaffalon2015desirability} in the classical case: we show that the axiomatisation of quantum rational preferences is equivalent to the quantum theory of coherent desirable gambles---the very same theory at the heart of \citeauthor{benavoli2016d}'s~\citeyearpar{benavoli2016d} formulation of QM---yet formulated so as to consider prizes other then events. Intuitively, this allows us to formally bring quantum rational preferences back to plain QM through an enlargement of the space of possibilities. We eventually show how this leads to quantum probabilities and utilities after enforcing axioms for state independence and completeness.

%
%
Before briefly reviewing related work and making some final comments, we illustrate one elegant consequence of the correspondence between preferences and desirability: namely, how to derive a coherent rule for updating preferences determining how should a subject rationally change her preferences in the prospect of obtaining new information in the form of an event.

Since the wording ``quantum'' is used nowadays with a number of acceptations in the literature, we would like to make precise what our framework actually addresses: that is, nothing else but gambling on quantum mechanics experiments; we are not, at this stage, endorsing any other interpretation of the quantum decision theory developed here. Moreover, we would like to remark that our framework is an actual generalisation of classical decision-theoretic approaches in the tradition of~\cite{anscombe1963}: in fact, we can recover them by simply focusing on the subset of Hermitian matrices made by diagonal real-valued matrices; that is, by focusing on classical experiments.

%
%
%
%
%
%
%
%



\section{Rational gambling on quantum experiments}\label{sec:gambling}
We start by defining a gambling system about the results of a quantum experiment. 
To this end, we consider two subjects: the bookmaker and the gambler (Alice).  The bookmaker prepares the quantum system in some quantum state. Alice
has her personal knowledge (beliefs) about the experiment---possibly no knowledge at all.

\begin{enumerate}
\item   {
The bookmaker 
 announces that he will measure the quantum system along  its $n$ orthogonal directions and so the outcome of the measurement is an element of $\Omega=\{\omega_1,\dots,\omega_n\}$,  with $\omega_i$ denoting the elementary event ``detection along $i$''. 
Mathematically,  it means that the quantum system is measured along its eigenvectors,\footnote{We mean the eigenvectors of the density matrix of the quantum 
system.} i.e., the projectors\footnote{A projector $\Pi$ is a set of $n$ positive semi-definite matrices in $\CH$ s.t.\   $\Pi_i\Pi_k=0$, $(\Pi_i)^2=\Pi_i=(\Pi_i)^\dagger$,  $\sum_{i=1}^n \Pi_i=I$.} $\Pi^*=\{\Pi^*_{1},\dots,\Pi^*_{n}\}$
and $\omega_i$ is the event ``indicated'' by the $i$-th projector. The bookmaker is fair, meaning that he will correctly perform the experiment and report
 the actual results to Alice.} 
  \item {Before the experiment, Alice declares the set of gambles she is willing to accept.  Mathematically, a gamble $G$ on this experiment 
is a Hermitian matrix in $\CH$, the space of all Hermitian  $n \times n$ matrices being denoted by $\CH$.  We will denote the set of gambles Alice is willing to accept by $\domain \subseteq \CH$.}
\item By accepting  a gamble $G$, Alice commits herself to receive  $\gamma_{i}\in \reals$ utiles  if the outcome of the experiment eventually happens to be 
$\omega_i$. The value $\gamma_{i}$ is defined from $G$ and $\Pi^{*}$ through $ \Pi^{*}_{i}G\Pi^{*}_{i}=\gamma_{i}\Pi^{*}_{i} \text{ for } i=1,\dots,n$. 
It is a real number since $G$ is Hermitian.
\end{enumerate}
Denote by $\gambles^+=\{G\in\CH:G\gneq0\}$ the
subset of all positive semi-definite and non-zero (PSDNZ) matrices  in $\CH$: we call them the set of \emph{positive gambles}.
The set of negative gambles is similarly given by $\gambles^-=\{G\in\CH:G\lneq0\}$.  Alice examines the gambles in $\CH$ and comes up with the subset $\domain$ of the gambles that she finds desirable. Alice's rationality is characterised as follows:
\begin{enumerate}[noitemsep,nolistsep]
 \item Any gamble $G \in \CH$ such that $G \gneq0$ must be desirable for Alice, given that it may increase Alice's 
utiles without ever decreasing them
 (\textbf{accepting partial gain}). This means that $ \gambles^+ \subseteq \domain$.
\item Any gamble $G \in \CH$ such that $G \lneq0$ must not be desirable for Alice, given that it may only decrease 
Alice's utiles without ever increasing them  (\textbf{avoiding partial loss}). This means that $ \gambles^- \cap 
\domain=\emptyset$.
\item If Alice finds $G$ desirable, that is
$G \in \domain$, then also $\nu G$ must be desirable for her for any $0<\nu \in \reals$ (\textbf{positive homogeneity}).
\item If Alice finds $G_1$ and $G_2$ desirable, that is
$G_1,G_2 \in \domain$, then she must also accept $G_1+G_2$, i.e., $G_1+G_2 \in \domain$ (\textbf{additivity}). 
\end{enumerate}
To understand these rationality criteria, originally presented in \citet[Sec. III]{benavoli2016d}, we must remember that mathematically the payoff for any gamble $G$
is computed as $\Pi_i^{*} G \Pi_i^{*}$ if the outcome of the experiment is the event indicated by $\Pi_i^{*}$.
Then the first two rationality criteria above hold no matter the experiment $\Pi^{*}$ that 
is eventually performed. 
In fact,  from the properties of PSDNZ matrices, if 
 $G \gneq0$ then  $\Pi_i^{*} G \Pi_i^{*}=\gamma_{i} \Pi_i^{*}$ with $\gamma^{*}_{i}\geq0$ for any $i$ and 
$\gamma_{j}>0$ for some $j$. Therefore, by accepting $G \gneq0$, Alice can only increase her utiles.
Symmetrically,  if $G \lneq0$ then $\Pi_i^{*} G \Pi_i^{*} = \gamma_{i} 
\Pi_i^{*}$ with $\gamma_{i}\leq 0$ for any $i$. 
Alice must then avoid the gambles $G \lneq0$ because they can only decrease her utiles.
This justifies  the first two rationality criteria. 
 For the last two, observe that 
 $$
 \Pi_i^{*} (G_1+G_2) \Pi_i^{*}=\Pi_i^{*} G_1 \Pi_i^{*}+\Pi_i^{*} G_2 \Pi_i^{*}=(\gamma_i+\vartheta_i) \Pi_i^{*},
 $$ 
 where we have  exploited the fact that $\Pi_i^{*} G_1 \Pi_i^{*}=\gamma_i  \Pi_i^{*}$ and $\Pi_i^{*} G_2 
\Pi_i^{*}=\vartheta_i \Pi_i^{*}$. Hence, if Alice accepts $G_1,G_2$, she must also accept $G_1+G_2$ because this 
will lead to a reward of $\gamma_i+\vartheta_i$.
 Similarly, if $G$ is desirable for Alice, then also $\Pi_i^{*} (\nu G) \Pi_i^{*}=
 \nu\Pi_i^{*}  G \Pi_i^{*}$ should be desirable for any $\nu>0$. 
 
In other words, as in the case of classical desirability \citep{williams1975,walley1991}, the four conditions above state only minimal requirements: that Alice would like to increase her wealth and not decrease it (conditions $1$ and $2$); and that Alice's utility scale is linear (conditions $3$ and $4$). The first two conditions should be plainly uncontroversial. The linearity of the utility scale is routinely assumed in the theories of personal, and in particular Bayesian, probability as a way to isolate  considerations of uncertainty from those of value (wealth).

%
%

We can characterise $\domain$ also from a geometric point of view. In fact, from the above properties, it follows that a coherent set of desirable gambles $\rdesirs$ is a convex cone  in $\CH$ that includes all positive gambles (accepting partial gains) and excludes all negative gambles (avoiding partial losses). Without loss of generality we can also assume that $\rdesirs$ is not pointed, i.e., $0 \notin \rdesirs$: Alice does not accept the null gamble. A coherent set of desirable gambles is therefore a non-pointed convex cone.
 
\begin{definition}
  \label{def:sdg}
 We say that  $\domain \in \CH$ is   a {\bf coherent quantum set of desirable gambles} (DG) if
\begin{description}[noitemsep,nolistsep]
 \item[(S1)] $\domain$ is a non-pointed convex-cone (positive homogeneity and additivity);
 \item[(S2)] if $G\gneq0$ then $G \in \domain$ (accepting partial gain).
 \end{description}
 
 If in addition a coherent set of desirable gambles satisfies the following property: 
 \begin{description}[noitemsep,nolistsep]
   \item[(S3)] if $G \in \domain$ then either $G \gneq0$ or $G -\epsilon I \in  \domain$ for some strictly positive real number $\epsilon$ (openness),\footnote{In \citet{benavoli2016d} we used another formulation of openness, namely (S3'): if $G \in \domain$ then either $G \gneq0$ or $G -\Delta \in  \domain$ for some $0<\Delta \in \CH$. (S3) and (S3') are provably equivalent given (S1) and (S2).}
\end{description}
then it is said to be a coherent quantum set of {\bf strictly} desirable gambles (SDGs).
\end{definition}
\noindent Note that the although the additional openness property (S3) is not necessary for rationality, it is technically convenient as it precisely isolates the kind of models we use in QM (as well as in classical probability), see \cite{benavoli2016d}.
Property (S3) has a  gambling interpretation too: it means that we will only consider gambles that are \emph{strictly} desirable for Alice; these are the positive ones or those for which she is willing to pay a positive amount to have them.
Note that assumptions (S1) and (S2)  imply that SDGs also avoids partial loss: if $G \lneq 0$, then $G \notin \domain$
\cite[Remark III.2]{benavoli2016d}. 



In \citet{benavoli2016d}, we have shown that maximal set of SDGs, that is SDG sets that are not included in any larger set of SDG, and density matrices are one-to-one.
The mapping between them is obtained through the standard inner product in $\CH$, i.e.,
$G\cdot R= Tr(G^\dagger R)$ with $G,R\in \CH$.
This follows by a  representation result whose proof is a direct application of Hahn-Banach theorem:

\begin{theorem}[Representation theorem from \citet{benavoli2016d}]\label{thm:repr}
\label{thm:dualityopen} 
The map that associates a maximal SDG the unique density matrix $\rho$ such that 
$Tr(G^\dagger \rho) \geq  0~ \forall G \in \domain$ defines a bijective correspondence between maximal SDGs and  density matrices. 
 Its inverse is the map $(\cdot)^\circ$ that associates each density matrix $\rho$ the  maximal SDG\footnote{Here the gambles  $G\gneq0 $ are treated separately because they are always desirable and, thus, they are not informative on Alice's beliefs
about the quantum system. Alice's knowledge is determined by the gambles that are not $G\gneq0 $.}
$(\rho)^\circ=\{ G \in \CH \mid G  \gneq0\}\cup\{G \in \CH \mid  Tr(G^\dagger \rho) > 0\}$.

\end{theorem}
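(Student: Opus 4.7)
The plan is to build the two maps explicitly and verify they are mutual inverses, using Hahn–Banach in the finite-dimensional real inner-product space $\CH$ equipped with the Frobenius product $G\cdot R = \mathrm{Tr}(G^\dagger R)$.

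Going from a maximal SDG $\domain$ to a density matrix: I would observe that $\domain$ and $\gambles^-$ are disjoint convex cones, by (S2) and the remark that SDGs avoid partial loss. Hahn–Banach then yields a nonzero Hermitian $\rho$ with $\mathrm{Tr}(G^\dagger \rho)\geq 0$ for all $G\in\domain$ and $\mathrm{Tr}(H^\dagger \rho)\leq 0$ for all $H\in\gambles^-$. Since $-\gambles^-=\gambles^+$, the second inequality gives $\mathrm{Tr}(P\rho)\geq 0$ for all PSDNZ $P$, and by self-duality of the PSD cone this forces $\rho\succeq 0$. Because $\rho\neq 0$ and PSD, $\mathrm{Tr}(\rho)>0$, so we can rescale to obtain $\mathrm{Tr}(\rho)=1$, i.e.\ a density matrix.

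For the reverse direction I would check directly that $(\rho)^\circ$ satisfies (S1)–(S3) and is maximal. Convex-cone closure and non-pointedness are immediate from linearity of $G\mapsto \mathrm{Tr}(G^\dagger\rho)$ and from $\mathrm{Tr}(0\cdot\rho)=0$; (S2) is built in. For openness (S3), if $G\in(\rho)^\circ$ is not in $\gambles^+$ then $\mathrm{Tr}(G^\dagger \rho)>0$, so for $\epsilon$ small enough $\mathrm{Tr}((G-\epsilon I)^\dagger \rho) = \mathrm{Tr}(G^\dagger\rho)-\epsilon>0$, hence $G-\epsilon I\in(\rho)^\circ$. Maximality follows because any $G^\star \notin(\rho)^\circ$ has $G^\star\not\gneq 0$ and $\mathrm{Tr}((G^\star)^\dagger\rho)\leq 0$, so combining $G^\star$ with elements of $(\rho)^\circ$ whose trace inner product with $\rho$ is arbitrarily small yields, up to a positive rescaling, a negative gamble and violates (S2).

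Finally I would check the two compositions. Starting from $\rho$, going to $(\rho)^\circ$ and back to some $\rho'$, the trick is to test $\rho'$ on $G_\epsilon = H - \mathrm{Tr}(H^\dagger\rho) I + \epsilon I$ for an arbitrary Hermitian $H$: one verifies $G_\epsilon\in(\rho)^\circ$ for every $\epsilon>0$, so $\mathrm{Tr}(G_\epsilon^\dagger \rho')\geq 0$, and letting $\epsilon\downarrow 0$ plus the symmetric argument for $-H$ forces $\mathrm{Tr}(H^\dagger\rho')=\mathrm{Tr}(H^\dagger\rho)$ for all $H$, whence $\rho'=\rho$. Starting from $\domain$ and forming $(\rho)^\circ$, one has $\domain\subseteq(\rho)^\circ$ (modulo the strict-vs-weak issue below), and maximality of $\domain$ then forces equality. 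The main obstacle I expect is precisely reconciling the weak inequality $\mathrm{Tr}(G^\dagger\rho)\geq 0$ supplied by Hahn–Banach with the strict inequality defining $(\rho)^\circ$; this is where openness (S3) must be exploited to show that any $G\in\domain\setminus\gambles^+$ actually satisfies $\mathrm{Tr}(G^\dagger \rho)>0$, by applying the weak bound to $G-\epsilon I\in\domain$ and letting $\epsilon\downarrow 0$.
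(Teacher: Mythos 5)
Your overall architecture---Hahn--Banach separation of $\domain$ from $\gambles^-$ to extract a PSD trace-one $\rho$, direct verification that $(\rho)^\circ$ is an SDG, and the two composition checks, with (S3) used to upgrade the weak inequality $Tr(G^\dagger\rho)\geq 0$ to the strict one---is the route the paper itself indicates, since it imports this theorem from \citet{benavoli2016d} with the remark that the proof is a direct application of Hahn--Banach. Most steps are fine; in particular the uniqueness argument via $G_\epsilon=H-Tr(H^\dagger\rho)I+\epsilon I$ works, and in the last step you do not even need to let $\epsilon\downarrow 0$: the fixed $\epsilon>0$ supplied by (S3) already gives $Tr(G^\dagger\rho)\geq\epsilon\,Tr(\rho)>0$.

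There is, however, a genuine gap in your proof that $(\rho)^\circ$ is \emph{maximal}. You claim that any $G^\star\notin(\rho)^\circ$ combined with elements of $(\rho)^\circ$ of arbitrarily small trace inner product yields, after rescaling, a negative gamble. This works when $Tr((G^\star)^\dagger\rho)<0$ (take $-G^\star-\delta I$ with $\delta$ small) and when $G^\star\lneq 0$, but it fails in the boundary case $Tr((G^\star)^\dagger\rho)=0$ with $G^\star$ indefinite, which exists for every $\rho$ (e.g.\ $n=2$, $\rho=I/2$, $G^\star=\mathrm{diag}(1,-1)$). Indeed, any conic combination $\lambda G^\star+\sum_i\mu_i G_i$ with $G_i\in(\rho)^\circ$ satisfies $Tr(\cdot\,\rho)\geq 0$; if it were NSD and nonzero this trace would have to vanish, forcing every $G_i$ actually used to be PSDNZ with support orthogonal to $\rho$, and then negativity of the sum would force $G^\star$ itself to be NSD, contradicting indefiniteness. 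So the cone generated by $(\rho)^\circ\cup\{G^\star\}$ incurs no partial loss and violates neither (S1) nor (S2): no negative gamble can be manufactured. What actually blocks the extension is the openness axiom applied to the putative larger set: if an SDG $\domain'$ contained $(\rho)^\circ\cup\{G^\star\}$, then since $G^\star\not\gneq 0$, (S3) gives $G^\star-\epsilon I\in\domain'$ for some $\epsilon>0$, while $\epsilon I-G^\star\in(\rho)^\circ\subseteq\domain'$ because $Tr((\epsilon I-G^\star)^\dagger\rho)=\epsilon>0$; additivity then puts $0\in\domain'$, contradicting non-pointedness. Since maximality of $(\rho)^\circ$ is precisely what makes $(\cdot)^\circ$ land in maximal SDGs, and hence what makes the correspondence bijective, this step must be repaired with the (S3)-based argument rather than the partial-loss one.
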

This representation theorem has several consequences. First, it provides a gambling interpretation of the first axiom of QM on density operators. Second, it shows that density operators are coherent, since
the set $(\rho)^\circ$ that they induce is a valid SDG. This also implies that QM is self-consistent---a gambler that uses QM to place bets on a quantum experiment cannot be made a partial (and, thus, sure) loser.  
Third, the first axiom of QM on $\mathbb{C}_h^{n \times n}$ is structurally and formally equivalent to 
Kolmogorov's first and second axioms about probabilities on $\mathbb{R}^n$ \cite[Sec. 2]{benavoli2016d}. In fact, they can be both derived via duality
from a coherent set of desirable gambles on $\mathbb{C}_h^{n \times n}$ and, respectively, $\mathbb{R}^n$.
In \citet{benavoli2016d} we have also derived Born's rule and the other three axioms of QM from the illustrated setting and shown that measurement, partial tracing and tensor product are just generalised probabilistic notions of Bayes rule, marginalisation and independence.
Finally,  as shown in \cite{BenavoliTODO}, the representation theorem enables us to derive a  Gleason-type theorem that holds for any dimension $n$ of a quantum system, hence even for $n = 2$, stating that the only logically consistent probability assignments are exactly the ones that are definable as the trace of the product of a projector and a density matrix operator.


\section{Quantum horse lotteries and preference relations}\label{sec:QLandPR}
As seen in Sec.~\ref{sec:gambling}, QM is just the Bayesian theory of probability formulated over Hermitian matrices. Now we proceed to extend such a theory of probability to make it handle non-linear utility. To this end, we work in the tradition of \citet{anscombe1963}. Central to this tradition is the notion of a horse lottery.

Consider a set of prizes $X=\{x_1,\dots,x_m\}$ with $m\geq 2$ (this last constraint will be clarified later). A \emph{horse lottery} is a compound lottery such that if $\omega\in\pspace$ occurs, it returns a simple lottery, which can depend on $\omega$, over the prizes in $X$. The idea is that at some later point the subject (Alice) will play the simple lottery thus earning one of the prizes.
Anscombe and Aumann's setting axiomatises rational preferences over horse lotteries; from this, it follows that there are probabilities and utilities that represent those preferences via maximum expected utility.

\subsection{Horse lotteries over complex numbers}
As in the classical case, now we consider that when $\Pi_i^*$ is observed, Alice receives a probability mass function $p_i$ (pmf) over the prices $X$ rather than the value $\gamma_{i}$ as in Sec.~\ref{sec:gambling}.
This framework is a composite system made of a quantum experiment and a classical experiment (on $X$). To describe it in a mathematically convenient way, we need to define gambles on this composite system.
First, we define the form of the gambles.
Since the experiment on $X$ is classical, it can be described by the subspace of $\CP$ of diagonal matrices; we denote it as $\D$.
Gambles on this composite system are therefore elements of $\D \otimes \CH\subset \CL$, where  $\otimes$ denotes the tensor product.
It can be observed that a gamble $G \in \D \otimes \CH\subset \CL$ is a block diagonal matrix with elements  in $\CH$, i.e.,
$$
G=\text{Diag}(G_1,\dots,G_m) \text{ with } G_k \in \CH.
$$
We are interested in the special case of gambles on $\D \otimes \CH$ that return a pmf $p_i$ when the quantum system is measured along some projector.
%
%
\begin{definition}\label{def:QL}
Let $Q \in \D \otimes \CH$. $Q$ is said to be a \emph{quantum horse lottery} (QH-lottery) if 
\begin{equation}\label{eq:Qlottery} 
\forall \Pi, \forall \Pi_i \in \Pi, \exists p_i \in \PO : (I_m \otimes \Pi_i) Q (I_m \otimes \Pi_i)= p_i \otimes \Pi_i, 
\end{equation}
where $\PO\subset \D$ denotes the subset of trace one diagonal-matrices whose elements are non-negative.
\end{definition}
The set $\PO$ is  isomorphic to the set of all probability mass functions (pmf) on $\reals^m$.
Therefore, with an abuse of terminology we improperly refer to the diagonal matrix $p_i$ as a pmf.
We denote the subspace of $\D \otimes \CH$ of QH-lotteries as $\QL$.
By Definition \ref{def:QL}, a QH-lottery is therefore a gamble that returns to Alice the pmf $p_i$ on $X$
whenever a measurement $\Pi$ is performed on the quantum system and the projector $\Pi_i \in \Pi$
is observed.
In what follows, we determine some properties of $\QL$.

Consider the matrix $Q$ in $\D \otimes \CH$ defined as $Q=\sum_{j=1}^n q_j \otimes V_j$, with
 $q_j \in \PO$ and $V=\{V_j\}_{j=1}^n$ is an orthogonal decomposition (OD) on $\CH$. It turns out that actually $Q$ is a QH-lottery (see Proposition \ref{prop:simpleQL} in the Appendix).  
 We call it a \emph{simple} QH-lottery.


Note that a convex combination of simple QH-lotteries is a QH-lottery. However, such a combination need not be simple anymore. 

The next theorem isolates necessary and sufficient conditions for an element of the composite space $\D \otimes \CH$ to be a QH-lottery.

\begin{theorem}\label{thm:QL}
Let $Q \in \D \otimes \CH$  of the form $Q=\text{Diag}(Q_1,\dots,Q_m)$. Then $Q \in \QL$
 if and only if  $Q_j\geq0$ 
 for every $j=1,\dots,m$ and $\sum_{j=1}^mQ_j=I_n$.
\end{theorem}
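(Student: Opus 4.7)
The plan is to unpack the defining equation of a QH-lottery in the block representation of $Q$ and reduce it to pointwise conditions on unit vectors. Writing $Q=\sum_{j=1}^{m} E_{jj}\otimes Q_j$, where $E_{jj}$ denotes the $m\times m$ diagonal matrix with a single $1$ in entry $(j,j)$, a direct tensor computation yields
\[
(I_m\otimes\Pi_i)\,Q\,(I_m\otimes\Pi_i)=\sum_{j=1}^{m} E_{jj}\otimes(\Pi_i Q_j \Pi_i),
\]
whereas the right-hand side of \eqref{eq:Qlottery} expands as $p_i\otimes\Pi_i=\sum_{j=1}^{m} p_{i,j}\,E_{jj}\otimes\Pi_i$. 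Matching block by block, the QH-lottery condition becomes: for every $\Pi_i$ occurring in some PVM and every $j$, there exists a scalar $p_{i,j}$ with $\Pi_i Q_j \Pi_i=p_{i,j}\,\Pi_i$, and the resulting diagonal matrix $p_i=\diag(p_{i,1},\dots,p_{i,m})$ lies in $\PO$.

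The key observation is that every $\Pi_i$ in a PVM is a rank-one projector $\Pi_i=|\psi\rangle\langle\psi|$ for a unit vector $|\psi\rangle\in\mathbb{C}^n$, and conversely any unit $|\psi\rangle$ can be completed to an orthonormal basis and thus appears in some PVM. For rank-one $\Pi_i$, the identity $\Pi_i Q_j \Pi_i=\langle\psi|Q_j|\psi\rangle\,\Pi_i$ holds automatically, so the operator part of the condition is free. What remains is the pair of scalar constraints
\[
\langle\psi|Q_j|\psi\rangle\geq 0\quad(j=1,\dots,m),\qquad \sum_{j=1}^{m}\langle\psi|Q_j|\psi\rangle=1,
\]
required to hold for every unit $|\psi\rangle\in\mathbb{C}^n$.

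Both directions of the theorem now follow immediately. The inequalities being valid for every $|\psi\rangle$ is, by definition, equivalent to $Q_j\geq 0$; and the equality, rewritten as $\langle\psi|\big(\sum_j Q_j - I_n\big)|\psi\rangle=0$ for every unit $|\psi\rangle$, forces $\sum_{j=1}^{m} Q_j = I_n$ because a Hermitian matrix is uniquely determined by its quadratic form. Conversely, if $Q_j\geq 0$ and $\sum_j Q_j=I_n$, then for any rank-one $\Pi_i=|\psi\rangle\langle\psi|$ the numbers $p_{i,j}:=\langle\psi|Q_j|\psi\rangle$ are non-negative and sum to one, whence $p_i\in\PO$ and the defining identity of a QH-lottery holds. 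I do not expect any serious obstacle here: the crux is simply the automatic proportionality $\Pi_i Q_j\Pi_i\propto\Pi_i$ for rank-one projectors, which is what reduces the QH-lottery condition to the POVM-type constraints on the block diagonal of $Q$.
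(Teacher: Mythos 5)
Your proof is correct and follows essentially the same route as the paper's: expand $Q$ block-diagonally, sandwich it with $I_m\otimes\Pi_i$, and use the arbitrariness of the measurement to force $Q_j\geq 0$ and $\sum_j Q_j=I_n$. Your explicit reduction to rank-one projectors $\Pi_i=|\psi\rangle\langle\psi|$ and the quadratic-form argument for $\sum_j Q_j=I_n$ is in fact a cleaner rendering of the step the paper handles via the negative-eigenvalue contradiction and the somewhat terse identity $\Pi_i\bigl(I-\sum_k Q_k\bigr)\Pi_i=0$.
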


\begin{remark}
It should be observed that $Q_1,\dots,Q_m$ are  Hermitian positive semi-definite matrices  that sum up to the identity operator.
This is the definition of positive-operator valued measure (POVM). Therefore the generalisation of horse lotteries to the quantum setting naturally leads to POVMs.
\end{remark}

\begin{remark}
The classical definition of horse lotteries can be recovered by the quantum one just by considering, instead of the space $\CH$, the space of diagonal real-valued matrices. Obviously, the composite system under consideration is $\D \times \Dn$. This space is isomorphic to the space $\mathcal{L}(X \times \Omega)$ of real valued functions whose domain is $X \times \Omega$, where $\Omega =\{1, \dots, n\}$ and $X=\{1, \dots, m\}$. By applying Definition \ref{def:QL}, we immediately obtain that 
an object  $Q \in \mathcal{L} (X \times \Omega) $ 
satisfies Property \ref{eq:Qlottery}  if and only if  
$Q(\cdot,\omega)$ is a pmf on  $X$ for each $\omega \in \Omega$, meaning that $Q$ is a (classical) horse lottery.
\end{remark}


\subsection{Coherent preference relations}
Horse lotteries are given a behavioural interpretation through a notion of preference. The idea is that Alice, who aims at receiving a prize from $X$, will prefer some horse lotteries over some others, depending on her knowledge about the quantum experiment, as well as on her attitude towards the prizes.
We consider the following well-known axioms of rational preferences, formulated here in the quantum setting.

\begin{definition}
A preference relation over quantum horse lotteries is a subset $\succ \subseteq \QL \times \QL$. It is said to be {\bf coherent} if it satisfies the following axioms:

\begin{description}
\item[(A.1)] $(\forall P, Q, R \in \QL) \Big(  P \not\succ P$ and $ (P \succ Q \succ R \Rightarrow P\succ R) \Big) $ \emph{[{strict partial order}]};
\item[(A.2)] $(\forall P, Q, R \in \QL)(\forall \alpha \in (0,1]) \Big( P \succ Q \, \Leftrightarrow \, \alpha P + (1-\alpha) R \succ \alpha Q + (1-\alpha) R \Big)$ \emph{[{mixture independence}]}.
\end{description}

\end{definition}

Our approach is therefore a straightforward generalisation of the classical setting to the quantum case. 
In the classical axiomatisations of rational preferences, it is customary to assume that the preference relation has a best and a worst horse lottery \citep{luce1957games,anscombe1963}. For us it is enough to assume that the worst one exists 
and that it actually corresponds to a worst element in $X$.\footnote{The two requirements---having a worst horse lottery and a worst element in $X$---have been shown equivalent in \citet[Proposition~8]{zaffalon2015desirability}.}
Formally, we denote the last  ($m$-th) element of $X$ as $z$.  By $p_z \in \PO$ we denote the pmf that assigns all the mass to $z$, that is \[
p_z(x)= \begin{cases} 1 & \text{if } x\neq z \\ 0 & \text{else}.\end{cases}\]
Finally, by $Z$ we denote the QH-lottery $p_z \otimes  I_n$. 
Notice that $Z= \sum_{i=n}^n p_z \otimes \Pi_i$, for every OD $\Pi=\{\Pi_i\}_{i=1}^n$, and therefore
 $(I_m \otimes \Pi_i)Z(I_m \otimes \Pi_i)=p_z\otimes \Pi_i$, for every $\Pi_i \in \Pi$. 

\begin{definition}
Let $\succ \subset\QL \times \QL$ be a preference relation. We say that $\succ$ has the worst outcome if there is $z \in X$ such that $P \succ Z$ for every $P \neq Z$. 
\end{definition}

In what follows we assume that preference relations have such a worst outcome. The rationale is that the elements of $X\backslash\{z\}$ are actual prizes, whereas $z$ represents the event that no prize in $X\backslash\{z\}$ is won (nothing is won). We have assumed that $m\geq 2$ precisely as a consequence of the assumption that there is the worst-outcome $z$ among the elements of $X$.

The scaled differences of QL-lotteries is the set defined by
\begin{equation}\label{eq:setA}
\mathcal{A}=\{ \lambda(P-Q) \mid \lambda >0, P,Q \in \QL\},
\end{equation}
where $\lambda$ is a positive real. 
The set $\mathcal{A}$ constitutes a vector space (Proposition \ref{prop:vectspace} in the Appendix).
\begin{theorem}
\label{thm:C-convex-cone}
The map
\begin{equation}\label{eq:C-convex-cone} \succ \; \mapsto \; \cone =\{\lambda(P-Q) \mid  P,Q \in \QL, ~\lambda >0, P\succ Q\}\end{equation}
determines a bijection between non-empty coherent preference relations over $\QL$ and non-empty non-pointed convex cones in $\mathcal{A}$.
\end{theorem}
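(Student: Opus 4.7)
The plan is to exhibit the inverse map explicitly, verify it is well-defined in both directions, and check that the two compositions are the identity. Define, for every non-empty non-pointed convex cone $\cone\subseteq\mathcal{A}$, the relation
\begin{equation*}
P\succ_\cone Q \;:\Leftrightarrow\; P-Q\in\cone.
\end{equation*}
The proof then splits into four verifications.

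First I would show that $\cone_\succ$ is a non-empty non-pointed convex cone in $\mathcal{A}$. Containment in $\mathcal{A}$ and non-emptiness are immediate. Non-pointedness follows from the irreflexivity clause of (A.1): if $0=\lambda(P-Q)$ with $\lambda>0$ then $P=Q$, contradicting $P\not\succ P$. Closure under positive scalars is trivial. Closure under addition is the only substantial step: given $\lambda_1(P_1-Q_1),\lambda_2(P_2-Q_2)\in\cone_\succ$, set $\lambda=\lambda_1+\lambda_2$ and $\alpha=\lambda_1/\lambda\in(0,1)$, so the sum equals $\lambda\bigl[(\alpha P_1+(1-\alpha)P_2)-(\alpha Q_1+(1-\alpha)Q_2)\bigr]$. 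Two applications of mixture independence (A.2) --- first to $P_1\succ Q_1$ with third argument $P_2$, then to $P_2\succ Q_2$ with third argument $Q_1$ --- give the chain
\begin{equation*}
\alpha P_1+(1-\alpha)P_2 \;\succ\; \alpha Q_1+(1-\alpha)P_2 \;\succ\; \alpha Q_1+(1-\alpha)Q_2,
\end{equation*}
and transitivity from (A.1) concludes. Convexity of $\QL$, which follows from Theorem~\ref{thm:QL} since convex combinations of POVMs are POVMs, guarantees that the intermediate mixtures stay in $\QL$, so the argument remains inside the ambient space.

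Second I would verify that $\succ_\cone$ is a non-empty coherent preference relation: non-emptiness comes from writing any $x\in\cone$ as $\mu(P-Q)$ and using positive homogeneity of $\cone$ to get $P-Q\in\cone$; irreflexivity and transitivity are direct translations of non-pointedness and additivity of $\cone$; and mixture independence follows from positive homogeneity, because for every $\alpha\in(0,1]$ and every $R\in\QL$ we have $P-Q\in\cone \Leftrightarrow \alpha(P-Q)=(\alpha P+(1-\alpha)R)-(\alpha Q+(1-\alpha)R)\in\cone$.

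Third, I would check the two round trips. One inclusion of $\succ\,=\,\succ_{\cone_\succ}$ is trivial via $P-Q=1\cdot(P-Q)$. The reverse inclusion is the delicate step: given $P-Q=\lambda(P_1-Q_1)$ with $P_1\succ Q_1$ and $\lambda>0$, rearrange to $P+\lambda Q_1=Q+\lambda P_1$, apply (A.2) to $P_1\succ Q_1$ with third argument $Q$ and weight $\alpha=\lambda/(1+\lambda)\in(0,1)$, and use the rearranged identity to recognise the two sides of the resulting strict preference as $(1-\alpha)P+\alpha Q_1$ and $(1-\alpha)Q+\alpha Q_1$; a final reverse application of (A.2) with third argument $Q_1$ and weight $1-\alpha$ yields $P\succ Q$. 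The round trip $\cone=\cone_{\succ_\cone}$ is then routine: one inclusion is positive homogeneity of $\cone$, and the other follows because any $x\in\cone$ admits a representation $x=\mu(P-Q)$ with $\mu>0$, whence $P-Q=(1/\mu)x\in\cone$ by positive homogeneity, so $P\succ_\cone Q$ and $x\in\cone_{\succ_\cone}$. The main obstacle throughout is the additivity verification in the first step and its dual in the third: both hinge on the convex-combination trick with $\alpha=\lambda_1/(\lambda_1+\lambda_2)$ (resp.\ $\lambda/(1+\lambda)$), on the convexity of $\QL$, and on carefully interleaving mixture independence with transitivity; everything else is bookkeeping.
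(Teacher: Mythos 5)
Your proposal is correct and follows essentially the same route as the paper's proof: the same inverse map (membership of $P-Q$ in $\cone$), irreflexivity $\leftrightarrow$ non-pointedness, and the convex-combination trick with $\alpha=\lambda_1/(\lambda_1+\lambda_2)$ interleaving mixture independence (A.2) with transitivity to get additivity of $\cone_\succ$. Your explicit verification of the round trips --- in particular the cancellation step showing that $P-Q=\lambda(P_1-Q_1)$ with $P_1\succ Q_1$ forces $P\succ Q$, via the rearrangement $P+\lambda Q_1=Q+\lambda P_1$ and both directions of (A.2) --- is a welcome addition that the paper's proof leaves implicit.
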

Thus, it turns out that non-pointed cones and coherent quantum preference relations are just two ways of looking at the same thing.


\section{Quantum desirability vs quantum preference: two faces of the same coin}\label{sec:desirablegambles}

In this section we follow the same strategy as in \cite{zaffalon2015desirability} to establish an equivalence between the theories of coherent quantum preference and coherent quantum desirability. To this end, we first define the projection operator that drops the $z$-components from an act.
\begin{definition}  The \emph{projection operator} is the functional
$\mathsf{Proj}:\D \otimes \CH \rightarrow \DP \otimes \CH$ that takes the QL-lottery ($m$-block diagonal matrix) $Q$ and returns
$\mathsf{Proj}(Q)=\text{Diag}(Q_1,\dots,Q_{m-1})$.
\end{definition}
In this paper, we are going to use this operator to project QH-lotteries in $\D \otimes \CH$into gambles on
$\DP \otimes \CH$. 
However, instead of working directly with the space $\QL$, in what follows it will be more convenient to reason on the space $\mathcal{A}$ of scaled differences of QH-lotteries defined in~\eqref{eq:setA}. 
Note also that the restriction of $\mathsf{Proj}$ to
$\mathcal{A}$
 is  injective. 

Based on the correspondence between cones on $\mathcal{A}$ and preference relations, it is then an easy step to show (see Proposition \ref{prop:rdesirs} in the Appendix) that given a coherent preference relation $\succ$, one can define a coherent set $\rdesirs$ of desirable gambles on $\DP \otimes \CH$ as
$\rdesirs = \{\lambda\, \mathsf{Proj}(P-Q):P\succ
Q,\lambda>0\}\label{eq:RfromC}
$ 
and with the property that 
\begin{equation}\label{eq:order}
P \succ Q\iff\mathsf{Proj}(P-Q)\in\rdesirs.
\end{equation}

One can actually verify that there is an exact correspondence between coherent sets of desirable gambles and coherent preference relations.

\begin{theorem}\label{thm:1to1}
There is a one-to-one correspondence between coherent sets of desirable gambles over $\DP \otimes \CH$  and coherent preference relations over $\QL\times\QL$. 
\end{theorem}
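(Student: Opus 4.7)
The plan is to leverage Proposition~\ref{prop:rdesirs}, which already supplies one direction of the correspondence via the map $\Psi\colon\;\succ\;\mapsto\;\rdesirs_{\succ} = \{\lambda\,\Proj(P-Q) \colon P\succ Q,\, \lambda>0\}$, and guarantees the characterising property~\eqref{eq:order}. What remains is to exhibit an explicit inverse $\Phi$ and verify that the two maps compose to the identity in both orders.

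The natural candidate is $\Phi\colon \rdesirs\mapsto\;\succ_{\rdesirs}$ defined by $P\succ_{\rdesirs} Q \iff \Proj(P-Q)\in\rdesirs$. Coherence should transfer cleanly: irreflexivity will follow from non-pointedness, since $\Proj(P-P)=0\notin\rdesirs$; transitivity from additivity of the cone, since $\Proj(P-R)=\Proj(P-Q)+\Proj(Q-R)$; and mixture independence from the cone property together with the identity $\Proj\bigl((\alpha P+(1-\alpha)R)-(\alpha Q+(1-\alpha)R)\bigr)=\alpha\,\Proj(P-Q)$. For the worst-outcome axiom, I would note that $Z=p_z\otimes I_n=\text{Diag}(0,\dots,0,I_n)$ yields $\Proj(Z)=0$, so $\Proj(P-Z)=\Proj(P)$; for any $P\neq Z$ at least one block $P_i$ with $i<m$ is nonzero (otherwise $\sum_{i<m}P_i=0$ forces $P_m=I_n$ and hence $P=Z$), so $\Proj(P)\gneq 0$ lies in $\rdesirs$ by axiom~(S2), giving $P\succ_{\rdesirs}Z$.

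The composition $\Phi\circ\Psi=\iden$ is then immediate from property~\eqref{eq:order}. For $\Psi\circ\Phi=\iden$, the inclusion $\rdesirs_{\succ_{\rdesirs}}\subseteq\rdesirs$ follows at once from the cone property of $\rdesirs$. The converse reduces to showing that every $G\in\DP\otimes\CH$ admits a representation $G=\lambda\,\Proj(P-Q)$ for some $\lambda>0$ and $P,Q\in\QL$---that is, that $\Proj$ restricted to $\mathcal{A}$ surjects onto $\DP\otimes\CH$. This surjectivity is the main technical obstacle; I would dispatch it through a Jordan decomposition argument. Writing each block $G_i=G_i^+-G_i^-$ with $G_i^\pm\geq 0$, choose $\lambda>0$ large enough so that $\sum_{i<m}G_i^{\pm}/\lambda\leq I_n$, and then set $P_i=G_i^+/\lambda$, $Q_i=G_i^-/\lambda$ for $i<m$, together with the residuals $P_m=I_n-\sum_{i<m}G_i^+/\lambda$ and $Q_m=I_n-\sum_{i<m}G_i^-/\lambda$. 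By Theorem~\ref{thm:QL} this delivers $P,Q\in\QL$ satisfying $\lambda(P_i-Q_i)=G_i$ for all $i<m$, hence $\lambda\,\Proj(P-Q)=G$. Finally, whenever $G\in\rdesirs$, the cone property yields $\Proj(P-Q)=G/\lambda\in\rdesirs$, so $P\succ_{\rdesirs}Q$ and therefore $G\in\rdesirs_{\succ_{\rdesirs}}$, closing the bijection.
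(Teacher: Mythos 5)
Your proposal is correct and follows essentially the same route as the paper: the crux in both is the surjectivity of $\mathsf{Proj}$ on $\mathcal{A}$, established by splitting each block into positive and negative parts, rescaling, and completing with a residual block so that Theorem~\ref{thm:QL} certifies the resulting $P,Q\in\QL$. The only cosmetic difference is that you define the inverse relation directly and re-verify the preference axioms (including the worst outcome via (S2)), whereas the paper routes this bookkeeping through the cone correspondence of Theorem~\ref{thm:C-convex-cone} and Proposition~\ref{prop:rdesirs}.
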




\section{Archimedeanity and the representation  theorem}\label{sec:objective}

Archimedeanity is an extra axiom adopted in traditional axiomatisations of rational preferences that tries to capture a form of continuity; it is such an axiom that makes it possible to have a representation of preferences in terms of expected utilities. \citet[Prop. 11]{zaffalon2015desirability} (and in the quantum case, we, by Proposition \ref{prop:archilimit} in the Appendix) have shown that the traditional Archimedean axiom has some drawbacks that can be fixed with a slight change in its definition. It is based on the notion of objective preference.
\begin{definition}[Objective preference]
Let $P,Q \in \QL$. We say that \emph{$P$ is objectively preferred to $Q$} if $\Proj(P-Q)\gneq 0$. 
We denote objective preference by $P \rhd Q$.
\end{definition}
(Note that the definition neglects the outcome $z$, since it is not one any subject actually wants.)

Objective preference is a preference relation. Moreover, it is the least preference relation over $\QL \times \QL$ in the sense that it is included in any other preference relation (in this sense, we call it ``objective''). Now we can directly rephrase Zaffalon and Miranda's Archimedean notion as follows for the quantum case:
 \begin{description}
\item[(A.3)] $(\forall P, Q \in \QL)\ P \succ Q, P \not\vartriangleright Q \Rightarrow (\exists\alpha \in (0,1))\  \alpha P + (1-\alpha) Z \succ Q$ [\emph{Weak Archimedeanity}].
\end{description} 
Analogously to their case, we obtain that it is equivalent to use coherent quantum sets of strictly desirable gambles in order to represent weakly Archimedean coherent preference relation on quantum horse lotteries. Recall also that a preference relation $\succ$ is said to be complete (or total) if either $P \succ Q$ or $Q \succ P$, for every $P,Q \in \QL$ with $P\neq Q$.

\begin{theorem}\label{thm:weak1to1}
There is a one to one correspondence between coherent sets of SDG over $\DP \otimes \CH$  and coherent preference relations over $\QL\times\QL$ that are weakly Archimedean. Moreover, such a correspondence induces a bijection between maximal coherent sets of SDG  and complete weakly Archimedean coherent preference relations.
\end{theorem}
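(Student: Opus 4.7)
The plan is to leverage Theorem~\ref{thm:1to1}, which already gives the bijection between coherent desirable-gamble sets on $\DP\otimes\CH$ and coherent preferences on $\QL\times\QL$ via $P\succ Q\iff\Proj(P-Q)\in\rdesirs$. Under this bijection it suffices to prove (i) that the openness axiom (S3) on the desirability side is equivalent to weak Archimedeanity (A.3) on the preference side, and (ii) that maximality of the SDG matches completeness of the preference. Two facts drive the translation: $\Proj(Z)=0$, because $Z=p_z\otimes I_n$ has its sole nonzero block in position $m$; and $0\le\Proj(Q)\le I_{(m-1)n}$ for every $Q\in\QL$, by Theorem~\ref{thm:QL}. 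Also $P\vartriangleright Q$ means exactly $\Proj(P-Q)\gneq 0$.

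For $(\text{A.3})\Rightarrow(\text{S3})$, I would take $G\in\rdesirs$ with $G\not\gneq 0$ and write $G=\lambda\Proj(P-Q)$ with $P\succ Q$ and $\lambda>0$. Replacing $(P,Q)$ by $(\mu P+(1-\mu)U,\mu Q+(1-\mu)U)$, where $U$ is the uniform QH-lottery with $U_j=I_n/m$ and $\mu\in(0,1)$ is fixed, preserves $\succ$ by mixture independence and, after absorbing $\mu$ into $\lambda$, still represents the same $G$; crucially, the new $Q$ satisfies $\Proj(Q)\ge\delta I$ with $\delta=(1-\mu)/m>0$. Since $G\not\gneq 0$ we have $P\not\vartriangleright Q$, so (A.3) supplies $\alpha\in(0,1)$ with $\alpha P+(1-\alpha)Z\succ Q$; translating via the bijection and using $\Proj(Z)=0$ and positive homogeneity gives $G-\beta\Proj(Q)\in\rdesirs$ for some $\beta>0$. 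The decomposition $G-\beta\delta I=(G-\beta\Proj(Q))+\beta(\Proj(Q)-\delta I)$, together with (S1)--(S2) applied to the positive semidefinite correction term, then yields $G-\epsilon I\in\rdesirs$ for $\epsilon=\beta\delta$, which is (S3). For $(\text{S3})\Rightarrow(\text{A.3})$, from $G\in\rdesirs$ with $G\not\gneq 0$ and $G-\epsilon I\in\rdesirs$, the identity $G-\epsilon\Proj(Q)=(G-\epsilon I)+\epsilon(I-\Proj(Q))$ together with $\Proj(Q)\le I$ places $G-\epsilon\Proj(Q)$ in $\rdesirs$, and back-translating gives $\alpha P+(1-\alpha)Z\succ Q$ with $\alpha=1/(1+\epsilon)$.

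For the maximality--completeness equivalence, completeness $\Rightarrow$ maximality is short: any strict extension $\succ'\supsetneq\succ$ must contain some $P\succ'Q$ with $P\not\succ Q$; completeness of $\succ$ then forces $Q\succ P$, hence $Q\succ'P$, and (A.1) is violated by $P\succ'Q$ together with $Q\succ'P$. Through the bijection this rules out any proper SDG extension of $\rdesirs$. For the converse, if $\succ$ is not complete then some $G=\Proj(P-Q)\ne 0$ has $G,-G\notin\rdesirs$; invoking Theorem~\ref{thm:dualityopen} I would show that one of $\rdesirs\cup\{G\}$ and $\rdesirs\cup\{-G\}$ generates a strictly larger SDG, by selecting the direction that remains compatible with a density matrix obtained as a small perturbation of the state representing $\rdesirs$, thereby contradicting maximality.

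The main obstacle is the $(\text{A.3})\Rightarrow(\text{S3})$ step: weak Archimedeanity delivers only a shift of $G$ by a multiple of $\Proj(Q)$, whereas openness requires the uniform shift $G-\epsilon I$. Bridging the gap needs a representative $(P,Q)$ for which $\Proj(Q)$ is uniformly positive definite, and the mixing trick with the uniform QH-lottery $U$, made possible by the assumption $m\ge 2$, is what supplies this bound.
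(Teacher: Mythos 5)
Your handling of the first claim is correct: reducing it, via Theorem~\ref{thm:1to1}, to the equivalence of (S3) with (A.3), and both of your directions go through. Mixing an arbitrary pair $(P,Q)$ with the uniform lottery $U$ so that $\Proj(Q')\geq\frac{1-\mu}{m}I$, and the splittings $G-\beta\delta I=(G-\beta\Proj(Q'))+\beta(\Proj(Q')-\delta I)$ and $G-\epsilon\Proj(Q)=(G-\epsilon I)+\epsilon(I-\Proj(Q))$, are sound (with the harmless caveat that when a correction term vanishes it is simply dropped rather than fed to (S2)). The paper reaches the same equivalence by a slightly different mechanism (Lemma~\ref{lem:trans} plus Proposition~\ref{prop:lifeisgood}): it translates $\rdesirs$ by $\Proj(U)$ and shows every element of the translate is of the form $\Proj(U)+\lambda\Proj(P-U)$ with $P\succ U$, so that weak Archimedeanity is only ever invoked on pairs whose second component is $U$ itself, for which $\Proj(U)=\frac{1}{m}I$ immediately converts the shift by $\Proj(U)$ into the uniform shift required by (S3). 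Your version instead applies (A.3) to a perturbed pair; both arguments exploit $m\geq 2$, $\Proj(Z)=0$ and mixture independence in the same way, and yours is arguably the more direct of the two.

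The second claim is where the proposal has a genuine gap. Completeness $\Rightarrow$ maximality is fine and is essentially the paper's inclusion-preservation remark. But your mechanism for maximality $\Rightarrow$ completeness cannot work. By Theorem~\ref{thm:duality} the maximal SDGs are exactly the sets $(R)^\circ$, and distinct maximal SDGs are pairwise incomparable, so no $(R')^\circ$ built from a perturbed matrix $R'$ can contain $\rdesirs=(R)^\circ$ together with a new gamble. More concretely, if $G\not\gneq 0$ and $Tr(G^\dagger R)\leq 0$, then any SDG containing $(R)^\circ\cup\{G\}$ must by (S3) contain $G-\epsilon I$ for some $\epsilon>0$; adding to it the gamble $-G+(\epsilon-\delta)I$, which lies in $(R)^\circ$ for $0<\delta<\epsilon$ since its trace against $R$ is at least $\epsilon-\delta>0$, produces $-\delta I\lneq 0$, a partial loss. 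Hence neither $\rdesirs\cup\{G\}$ nor $\rdesirs\cup\{-G\}$ generates an SDG, and the strictly larger SDG you want to exhibit does not exist by this route. The paper argues this direction differently: it invokes an extension result at the level of preferences (every weakly Archimedean coherent preference relation can be extended to a complete one, by a standard argument) together with the fact that the correspondence preserves inclusion, so that a maximal $\rdesirs$ whose relation were incomplete would be properly included in the SDG of the completed relation. That extension step is the real content of the second claim and is absent from your argument; note it is also a delicate point, since for a full-support $R$ the relation induced by $(R)^\circ$ leaves pairs whose difference is indefinite with $Tr(\Proj(P-Q)^\dagger R)=0$ incomparable, so any proof of maximal $\Rightarrow$ complete has to confront precisely these ties rather than perturb the representing state.
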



Based on Theorem \ref{thm:weak1to1}, we can then obtain a representation theorem for complete weakly Archi\-me\-dean coherent preference relations along these lines.
First of all, Theorem \ref{thm:repr} from \citet{benavoli2016d} can be restated  in the case of quantum horse lotteries as follows.
\begin{theorem}\label{thm:duality}
The map that associates a maximal SDG over $\DP \otimes \CH$ the unique trace-one positive matrix $R\in \DP \otimes \CH$ such that 
$Tr(G^\dagger R) \geq  0~ \forall G \in \domain$ defines a bijective correspondence between maximal SDGs over $\DP \otimes \CH$ and  trace-one positive matrices over $\DP \otimes \CH$. 
 Its inverse is the map $(\cdot)^\circ$ that associates each trace-one positive matrix $R$ the  maximal SDG
\begin{equation}
\label{eq:induced}
(R)^\circ=\{ \DP \otimes \CH \mid G  \gneq0\}\cup\{G \in \DP \otimes \CH \mid  Tr(G^\dagger R) > 0\}. 
\end{equation}
\end{theorem}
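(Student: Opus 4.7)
Proof plan. The result is the direct analogue of Theorem \ref{thm:repr}, obtained by replacing the ambient space $\CH$ with the block-diagonal subspace $\DP \otimes \CH$ of $(m-1)$-block Hermitian matrices, and my plan is to rerun the Hahn-Banach separation argument that proved Theorem \ref{thm:repr}, but intrinsically inside this subspace. The structural ingredients all survive: $\DP \otimes \CH$ is a finite-dimensional real inner-product space under the trace pairing $\langle G,R\rangle = Tr(G^\dagger R)$, and its cone of positive semi-definite elements---block-diagonal matrices with each block in $\gambles^+$---is a closed, convex, pointed cone with nonempty interior. Since Definition \ref{def:sdg} refers only to these ingredients (a real vector space equipped with a positive cone), the notions of coherent SDG, openness, and maximality transport to $\DP \otimes \CH$ without modification.

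For the forward direction, given a maximal SDG $\domain \subseteq \DP \otimes \CH$, I would apply Hahn-Banach to separate $\domain$ from the open cone of strictly negative gambles. Riesz representation within $\DP \otimes \CH$ supplies a separating functional of the form $G \mapsto Tr(G^\dagger R)$ with $R \in \DP \otimes \CH$; crucially $R$ sits in the subspace automatically---this is why we work intrinsically rather than embedding into $\mathbb{C}_h^{(m-1)n\times(m-1)n}$ and restricting later. Non-negativity of the functional on positive gambles forces $R \geq 0$, and rescaling fixes $Tr(R)=1$. Uniqueness of $R$ then follows from the maximality of $\domain$, since any two distinct trace-one positive matrices would induce two distinct sets via~\eqref{eq:induced}, and a maximal SDG can contain at most one of them.

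For the inverse direction, given a trace-one positive $R \in \DP \otimes \CH$, I would verify directly that the set $(R)^\circ$ defined by~\eqref{eq:induced} satisfies (S1)--(S3): it is a convex cone by linearity of the trace, it contains every $G \gneq 0$ by construction, and it is open because the defining inequality $Tr(G^\dagger R)>0$ is strict. Maximality follows from the standard argument that adjoining any $G \notin (R)^\circ$ would force $Tr(G^\dagger R)\leq 0$ and produce, by additivity with a suitable element of $(R)^\circ$, a strictly negative gamble---contradicting (S2). Checking that the two maps invert each other then closes the bijection. The only genuinely delicate point is ensuring that every step of the separation argument respects the block-diagonal structure of $\DP \otimes \CH$; treating this space as a Hilbert space in its own right, rather than as a subspace of the full Hermitian matrix space to which one then restricts, is what keeps that bookkeeping transparent.
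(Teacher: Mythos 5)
Your plan is essentially the paper's own route: the paper offers no separate proof of Theorem~\ref{thm:duality}, treating it as a restatement of Theorem~\ref{thm:repr} in the space $\DP\otimes\CH$, which is exactly your intrinsic rerun of the Hahn--Banach/duality argument in that block-diagonal Hermitian space with the trace pairing and its PSD cone. The approach is correct and matches the intended justification (your uniqueness and maximality steps are only sketched, but they follow the standard argument underlying Theorem~\ref{thm:repr}).
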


All trace-one positive  matrices $R$ are of the form $R=\diag \left( p_1 \rho_1, \dots, p_{m-1} \rho_{m-1} \right)$ with $\rho_i \in \CH$ being density matrices and $\diag(p_1, \dots, p_{m-1} )\in \mathbb{P}^{m-1}$.
Hence, applying Theorems \ref{thm:weak1to1} and \ref{thm:duality}  to Property \ref{eq:order} yields the following representation result for complete preference relations:

\begin{corollary}\label{cor:duality}
A  relation $\succ$ over $\QL\times \QL$ is a complete weakly Archimedean coherent preference relation if and only if there is a unique trace-one positive matrix $R=\diag \left( p_1 \rho_1, \dots, p_{m-1}\rho_{m-1} \right)$ such that
\begin{equation}
P \succ Q \Leftrightarrow \Big( \text{either } P \rhd Q \text{ or }\sum_{i=1}^{m-1}p_i  Tr(P_i^\dagger \rho_i) >   \sum_{i=1}^{m-1}p_i  Tr(Q_i^\dagger \rho_i) \Big) ~ \forall P, Q \in \QL.
\end{equation} 
\end{corollary}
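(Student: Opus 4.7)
The plan is to chain Theorem~\ref{thm:weak1to1} with Theorem~\ref{thm:duality} and apply the defining equivalence \eqref{eq:order}. First, by Theorem~\ref{thm:weak1to1}, a complete weakly Archimedean coherent preference relation $\succ$ corresponds bijectively to a unique maximal coherent set of SDG $\rdesirs$ over $\DP \otimes \CH$, with $P \succ Q \iff \Proj(P-Q) \in \rdesirs$. Then by Theorem~\ref{thm:duality}, $\rdesirs = (R)^\circ$ for a unique trace-one positive $R \in \DP \otimes \CH$, and substituting into \eqref{eq:induced} yields
$$P \succ Q \iff \Proj(P-Q) \gneq 0 \ \text{ or }\ Tr(\Proj(P-Q)^\dagger R) > 0.$$
The first disjunct is precisely the definition of $P \rhd Q$.

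Next I would unfold the trace condition block-wise. As noted just before the corollary, any trace-one positive $R \in \DP \otimes \CH$ can be written as $R = \diag(p_1 \rho_1, \dots, p_{m-1}\rho_{m-1})$ with $\rho_i \in \CH$ density matrices and $\diag(p_1,\dots,p_{m-1}) \in \mathbb{P}^{m-1}$ (set $p_i = Tr(R_i)$ for the $i$-th Hermitian positive semi-definite diagonal block $R_i$ and $\rho_i = R_i/p_i$ whenever $p_i > 0$). Writing $P = \diag(P_1,\dots,P_m)$ and $Q = \diag(Q_1,\dots,Q_m)$, one has $\Proj(P-Q) = \diag(P_1-Q_1,\dots,P_{m-1}-Q_{m-1})$; since the trace of a product of block-diagonal matrices is the sum of the traces of the block products,
$$Tr(\Proj(P-Q)^\dagger R) = \sum_{i=1}^{m-1} p_i \bigl(Tr(P_i^\dagger \rho_i) - Tr(Q_i^\dagger \rho_i)\bigr),$$
which is positive exactly when $\sum_{i=1}^{m-1} p_i Tr(P_i^\dagger \rho_i) > \sum_{i=1}^{m-1} p_i Tr(Q_i^\dagger \rho_i)$, yielding the stated equivalence.

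For the converse direction, any $R$ of the stated form induces via \eqref{eq:induced} a maximal SDG, which by Theorem~\ref{thm:weak1to1} corresponds to a complete weakly Archimedean coherent preference satisfying the claimed equivalence; uniqueness of $R$ as a matrix is inherited from Theorem~\ref{thm:duality} (note that the factorisation $p_i \rho_i$ is only unique on the indices with $p_i > 0$, but the product, and hence $R$, is unique regardless). The proof is essentially a bookkeeping exercise since the preceding theorems do the heavy lifting; I do not anticipate any real obstacle, the only subtlety being to observe that the two disjuncts---$P \rhd Q$ and the strict trace inequality---match exactly the two cases in the definition of $(R)^\circ$ in \eqref{eq:induced}, so no overlap or inconsistency arises.
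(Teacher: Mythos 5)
Your proposal is correct and follows essentially the same route as the paper, which likewise obtains the corollary by combining Theorem~\ref{thm:weak1to1} (maximal SDGs $\leftrightarrow$ complete weakly Archimedean coherent preferences, via~\eqref{eq:order}) with Theorem~\ref{thm:duality} and the block-diagonal form $R=\diag(p_1\rho_1,\dots,p_{m-1}\rho_{m-1})$; your explicit blockwise trace computation and the remark on uniqueness when some $p_i=0$ are just the bookkeeping the paper leaves implicit.
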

Consistently with our generalisation of Gleason's theorem \citep{BenavoliTODO}, this result holds in any dimension (even $n=2$), because we ask  preference relations to be coherent.


\section{Coherent updating and state independence}\label{sec:update}

This section shows how to derive in  a very simple, elegant, way a coherent rule for updating preferences. In particular our aim is to answer this question: how should Alice change her preferences in the prospect of obtaining new information in the form of an event?

We initially assume that Alice considers an event ``indicated'' by a certain projector $\Pi_i \in \CH$
in $\Pi=\{\Pi_i\}_{i=1}^n$. The information it represents is: an experiment $\Pi$ is performed and the event indicated by $\Pi_i$ happens.\footnote{{
We assume  that the quantum measurement device is a ``perfect meter'' (an ideal common  assumption in QM), i.e., there are not observational errors---Alice can trust the
received information.}} 

Now, assume that Alice's preferences are modelled by the coherent relation $\succ$ on $\QL$. From Theorem~\ref{thm:1to1} we can consider the coherent set $\rdesirs$ in $\DP \otimes \CH$. 
Hence, we reason as in the derivation of the second axiom of QM in \citet[Sec. V]{benavoli2016d}.
Under the assumption that an experiment $\Pi$ is performed and the event indicated by $\Pi_i$ happens, Alice can focus on gambles that are contingent on $I_{m-1}\otimes\Pi_i$: these are the gambles such that ``outside'' $I_{m-1}\otimes\Pi_i$ no utile is received or due---status quo is maintained---; in other words, they represent gambles that are called off if the outcome of the experiment is not $\Pi_i$. 
Mathematically, these  gambles are of the form
$$
G=\left\{\begin{array}{ll}
 H &  \text{if } I_{m-1}\otimes\Pi_i \text{ occurs},\\
 0 &  \text{if } I_{m-1}\otimes\Pi_j \text{ occurs, with}  ~j\neq i.\\
  \end{array}\right.
$$

In this light, we can define Alice's conditional preferences by moving to the equivalent view on gambles, restricting the attention to gambles of the form $(I_{m-1}\otimes\Pi_i) G (I_{m-1}\otimes\Pi_i)=p\otimes\Pi_i$ with $G \in \DP \otimes\CH$, and finally updating the preferences by looking at the corresponding preference relation.

\begin{definition}
 Let  $\succ$ be a  preference relation. The relation obtained as $\succ_{\Pi_i}:=\Proj_1^{-1}(\rdesirs_{\Pi_i})$, with
\begin{equation}
\label{eq:condition}
\rdesirs_{\Pi_i}=\left\{G \in \DP \otimes\CH \mid  G \gneq0 \textit{ or }(I_{m-1}\otimes\Pi_i) G (I_{m-1}\otimes\Pi_i) \in \rdesirs \right\}
\end{equation} 
is  called the {\bf preference relation conditional} on  $\Pi_i$.
\end{definition}

By the same argument as in \citet[Prop. A.6]{benavoli2016d}, one can prove that $\rdesirs_{\Pi_i}$ is a coherent set of (strictly) desirable gambles, whenever $\rdesirs$ is a coherent set of (strictly) desirable gambles. By Theorems \ref{thm:1to1} and \ref{thm:weak1to1} this yields that:

\begin{theorem}\label{thm:update}
Let $\succ$ be a (weakly Archimedean) coherent preference relation.  The relation $\succ_{\Pi_i}$ conditional on the event $\Pi_i$ is also a (weakly Archimedean) coherent preference relation.
\end{theorem}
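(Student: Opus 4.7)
The plan is to leverage the bijections provided by Theorems~\ref{thm:1to1} and~\ref{thm:weak1to1}: translate $\succ$ into a coherent set $\rdesirs$ of desirable gambles on $\DP \otimes \CH$, verify that the conditioned set $\rdesirs_{\Pi_i}$ defined in \eqref{eq:condition} is itself coherent (and strictly coherent whenever $\rdesirs$ is strictly coherent), and then translate back via $\Proj_1^{-1}$ to conclude that $\succ_{\Pi_i}$ is a coherent preference relation (weakly Archimedean in the SDG case).

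The core of the argument is checking that $\rdesirs_{\Pi_i}$ satisfies (S1) and (S2) of Definition~\ref{def:sdg}, plus (S3) in the strictly desirable case. Property (S2) is built into the definition of $\rdesirs_{\Pi_i}$. For non-pointedness, observe that $(I_{m-1}\otimes\Pi_i)\,0\,(I_{m-1}\otimes\Pi_i) = 0 \notin \rdesirs$ and $0 \not\gneq 0$, so $0 \notin \rdesirs_{\Pi_i}$. Closure under positive scaling and addition follows by case analysis over the two clauses of \eqref{eq:condition}: the map $G \mapsto (I_{m-1}\otimes\Pi_i) G (I_{m-1}\otimes\Pi_i)$ is linear, so the convex-cone property of $\rdesirs$ transfers to those gambles satisfying the conditioning clause; mixed cases are handled by noting that any $G \gneq 0$ also has image $(I_{m-1}\otimes\Pi_i) G (I_{m-1}\otimes\Pi_i)$ either zero or positive semi-definite, so additivity still lands inside $\rdesirs_{\Pi_i}$.

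The delicate step is openness (S3). Suppose $G \in \rdesirs_{\Pi_i}$ with $G \not\gneq 0$, and set $H := (I_{m-1}\otimes\Pi_i) G (I_{m-1}\otimes\Pi_i) \in \rdesirs$. I need $\epsilon>0$ such that $G - \epsilon I \in \rdesirs_{\Pi_i}$, i.e.\ such that $H - \epsilon(I_{m-1}\otimes\Pi_i) \in \rdesirs$, since $G - \epsilon I$ will typically not be $\gneq 0$. By openness of $\rdesirs$, either $H \gneq 0$ or there is some $\epsilon_0 > 0$ with $H - \epsilon_0 I \in \rdesirs$; in the latter case, writing $H - \epsilon_0(I_{m-1}\otimes\Pi_i) = (H - \epsilon_0 I) + \epsilon_0\bigl(I - I_{m-1}\otimes\Pi_i\bigr)$ exhibits the required element as the sum of a gamble in $\rdesirs$ and a positive semi-definite gamble, hence again in $\rdesirs$ by (S2) and additivity. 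In the case $H\gneq 0$, the support of $H$ lies in the range of $I_{m-1}\otimes\Pi_i$, so for sufficiently small $\epsilon$ one has $H - \epsilon(I_{m-1}\otimes\Pi_i) \gneq 0$ directly. This is the argument from \citet[Prop.~A.6]{benavoli2016d} transported to our setting.

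Once $\rdesirs_{\Pi_i}$ has been shown to be coherent (resp.\ strictly coherent), Theorem~\ref{thm:1to1} (resp.\ Theorem~\ref{thm:weak1to1}) yields a unique coherent (resp.\ weakly Archimedean coherent) preference relation $\succ_{\Pi_i} = \Proj_1^{-1}(\rdesirs_{\Pi_i})$, closing the proof. The principal obstacle is the openness step, because conditioning collapses the gambles onto the subspace selected by $\Pi_i$, so the openness witness for $\rdesirs$ cannot be used verbatim; the fix is to trade off an identity-direction perturbation against a positive gamble supported outside the range of $I_{m-1}\otimes\Pi_i$, as outlined above.
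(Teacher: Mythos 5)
Your overall route coincides with the paper's: the paper proves Theorem~\ref{thm:update} by asserting (with a pointer to \citet[Prop.~A.6]{benavoli2016d}) that the set $\rdesirs_{\Pi_i}$ of \eqref{eq:condition} is a coherent set of (strictly) desirable gambles whenever $\rdesirs$ is, and then invoking Theorems~\ref{thm:1to1} and~\ref{thm:weak1to1}; your verification of (S1) and (S2) for $\rdesirs_{\Pi_i}$ (non-pointedness, scaling, additivity with the mixed cases, and (S2) by construction) is correct, so the non-parenthetical part of the statement is fine. The problem is the second half of your openness step. When $H:=(I_{m-1}\otimes\Pi_i)G(I_{m-1}\otimes\Pi_i)\gneq0$ you claim that, because the range of $H$ is contained in the range of $I_{m-1}\otimes\Pi_i$, one has $H-\epsilon(I_{m-1}\otimes\Pi_i)\gneq0$ for small $\epsilon$. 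Range containment is not enough: you would need the compression of $H$ to that range to be positive \emph{definite}. With $\Pi_i$ rank one, $H=\diag(\gamma_1\Pi_i,\dots,\gamma_{m-1}\Pi_i)$ with all $\gamma_j\ge0$; as soon as $m\geq3$ it can happen that some $\gamma_j=0$ while another is positive, and then $H-\epsilon(I_{m-1}\otimes\Pi_i)$ contains the block $-\epsilon\Pi_i$, hence is not positive semi-definite for any $\epsilon>0$. Your argument therefore covers only $m=2$ or the case where every $\gamma_j>0$.

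Moreover, this case cannot be repaired with the tools you allow yourself, namely (S1)--(S3) of $\rdesirs$ alone: since $H\gneq0$, (S3) applied to $H$ gives no witness, and for a maximal $\rdesirs=(R)^\circ$ as in Theorem~\ref{thm:duality} with $(I_{m-1}\otimes\Pi_i)R=0$ (Alice assigns ``probability zero'' to the conditioning event) one gets $Tr\bigl((H-\epsilon(I_{m-1}\otimes\Pi_i))^\dagger R\bigr)=0$, so $H-\epsilon(I_{m-1}\otimes\Pi_i)\notin\rdesirs$ for every $\epsilon>0$. Concretely, take $m=3$, $n=2$, $\Pi_i=\diag(1,0)$, $G=\diag(G_1,G_2)$ with $G_1=\diag(1,-1)$, $G_2$ the zero block, and $R=\diag(R_1,R_2)$ with $R_1$ the zero block and $R_2=\diag(0,1)$: then $G\in\rdesirs_{\Pi_i}$, $G\not\gneq0$, yet $G-\epsilon I\notin\rdesirs_{\Pi_i}$ for all $\epsilon>0$. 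This is exactly the zero-probability corner that your appeal to ``the argument of Prop.~A.6 transported to our setting'' glosses over: in the single-system setting of \citet{benavoli2016d} the compression is the single term $\gamma\Pi_i$ and the difficulty does not arise, whereas here the conditioning projector $I_{m-1}\otimes\Pi_i$ has rank $m-1$. So case (b) of your openness argument is a genuine gap, and the weakly Archimedean half of the theorem needs either an additional hypothesis or a different treatment of the vanishing-$\gamma_j$ situation rather than the perturbation trade-off you propose.
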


Now, we rely on conditioning to introduce the concept of state-independent preferences. For this purpose, we use results in
\cite{benavoli2016d} to prove the fourth postulate of QM about composite systems. 
We first  define the concept of epistemic irrelevance.
\begin{definition}\label{def:epirrelev}
Let $\mathcal{R}\subset  \DP \otimes \CH,$ and let us define
$$
\begin{array}{rcl}
\marg_{\DP}(\mathcal{R})&=&\left\{G \in \DP \mid G \otimes I_n \in \mathcal{R}\right\},\\
\end{array}
$$
An SDG $\rdesirs$ on $\DP \otimes \CH$  is said to satisfy {\bf epistemic irrelevance}
of the subsystems $\DP$ to $\CH$ when $\marg_{\DP}(\rdesirs)=\marg_{\DP}(\rdesirs_{\Pi_i})$ for each projection measurement $\Pi=\{\Pi_i\}_{i=1}^n$.
 \end{definition}
 Let us briefly explain this definition. The set $\rdesirs_{\Pi_i}$ is the SDG conditional on the event indicated 
by 
 $\Pi_i$, as it follows from its definition and~\eqref{eq:condition}. Thus,  $\marg_{\DP}(\rdesirs)=\marg_{\DP}(\rdesirs_{\Pi_i})$  means that
Alice's marginal SDG $\marg_{\DP}(\rdesirs)$ on the subsystem $\DP$ and the marginal on $\DP$ of Alice's
SDG updated with the information ``the event indicated by $\Pi_i$ has happened'', which is $\marg_{\DP}(\rdesirs_{\Pi_i})$, coincide.
If this holds for all possible $\Pi_i$'s, then any information on $\CH$ does not change Alice's beliefs on $\DP$: this is precisely
the definition of epistemic irrelevance.  In case $\rdesirs$ is maximal and satisfies  epistemic irrelevance, we have shown in \citet[Sec. VII.c]{benavoli2016d} that the representation Theorem \ref{thm:duality} applied to such $\rdesirs$ defines a matrix $R$ that factorizes as  $R=p\otimes \rho$. Therefore, as in the classical framework for decision theory, the ``joint'' density matrix $R$ factorizes as the product of  $p \in \mathbb{P}^{m-1}$ and the density matrix  $\rho \in \CH$. Stated otherwise,  $\rdesirs$ models independence  between utility ($p$) and the ``probabilistic'' information ($\rho$) associated to the quantum system.  Alice's preferences are  thusly state-independent. 

In case of epistemic irrelevance, 
Corollary \ref{cor:duality} can therefore be reformulated as follows.

\begin{corollary}\label{cor:dualityirrelevance}
Let $\succ$ be  a complete weakly Archimedean coherent preference relation over $\QL\times \QL$ satisfying epistemic irrelevance, then there is a unique trace-one positive matrix $R=p\otimes \rho$ with $p=\diag \left( p_1, \dots, p_{m-1} \right)  \in \mathbb{P}^{m-1}$ and   $\rho \in \CH$ such that
\begin{equation}
P \succ Q \Leftrightarrow \Bigg( \text{either } P \rhd Q \text{ or }  Tr\left(\left(\sum_{i=1}^{m-1}p_i P_i^\dagger\right) \rho\right) >   Tr\left(\left(\sum_{i=1}^{m-1}p_i Q_i^\dagger\right) \rho\right) ~~\Bigg)~ \forall P, Q \in \QL.
\end{equation} 
\end{corollary}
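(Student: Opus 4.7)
The plan is to reduce the statement to Corollary~\ref{cor:duality} by using the epistemic irrelevance hypothesis to force the trace-one positive matrix $R$ obtained there to factorize as $R = p \otimes \rho$.

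First, I would apply Corollary~\ref{cor:duality} to the complete weakly Archimedean coherent preference relation $\succ$. This yields a unique trace-one positive matrix $R = \diag(p_1 \rho_1, \dots, p_{m-1} \rho_{m-1})$, with each $\rho_i \in \CH$ a density matrix and $\diag(p_1, \dots, p_{m-1}) \in \mathbb{P}^{m-1}$, such that $P \succ Q$ iff either $P \rhd Q$ or $\sum_{i=1}^{m-1} p_i\, Tr(P_i^\dagger \rho_i) > \sum_{i=1}^{m-1} p_i\, Tr(Q_i^\dagger \rho_i)$.

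Second, I would translate the epistemic irrelevance hypothesis on $\succ$ into the corresponding property of the maximal set of strictly desirable gambles $\rdesirs$ that represents $\succ$ via Theorem~\ref{thm:weak1to1}. As recalled in the paragraph preceding the statement and established in \citet[Sec.~VII.c]{benavoli2016d}, when $\rdesirs$ is maximal and satisfies epistemic irrelevance, the representation Theorem~\ref{thm:duality} applied to $\rdesirs$ produces a matrix $R$ that factorizes as $R = p \otimes \rho$ with $p \in \mathbb{P}^{m-1}$ and $\rho \in \CH$ a density matrix. In other words, all the $\rho_i$'s coincide with a common $\rho$.

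Third, substituting $\rho_i = \rho$ into the inequality from Corollary~\ref{cor:duality} and using linearity of the trace yields
\[
\sum_{i=1}^{m-1} p_i\, Tr(P_i^\dagger \rho) = Tr\!\left(\left(\sum_{i=1}^{m-1} p_i P_i^\dagger\right)\! \rho\right),
\]
and symmetrically for $Q$, giving the stated form of the representation. For the converse, I would check that any $R = p \otimes \rho$ of the stated shape induces, via~\eqref{eq:induced} and Theorem~\ref{thm:weak1to1}, a complete weakly Archimedean coherent preference relation; the product form of $R$ immediately implies the factorization of marginals needed for epistemic irrelevance. Uniqueness of the pair $(p,\rho)$ reduces to the uniqueness of $R$ in Corollary~\ref{cor:duality}: $p$ is recovered by partial tracing over $\CH$, and $\rho$ is then determined by the factorization.

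The only non-routine ingredient is the factorization $R = p \otimes \rho$ under epistemic irrelevance, but this is exactly what \citet[Sec.~VII.c]{benavoli2016d} already supplies; the remaining manipulations are a straightforward application of the linearity of the trace.
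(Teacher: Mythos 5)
Your proposal matches the paper's own (implicit) argument: the paper likewise obtains the result by combining Corollary~\ref{cor:duality} with the factorization $R=p\otimes\rho$ that epistemic irrelevance forces on the maximal SDG via Theorem~\ref{thm:duality} (as established in \citet[Sec.~VII.c]{benavoli2016d}), and then rewriting the expected-utility inequality by linearity of the trace. Your additional remarks on the converse direction and on recovering $(p,\rho)$ by partial tracing are consistent with this and fill in details the paper leaves to the reader.
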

The inner term $\sum_{i=1}^{m-1}p_i P_i^\dagger$ (resp.\ $\sum_{i=1}^{m-1}p_i Q_i^\dagger$) can be interpreted as the utility of $P$ (resp.\ $Q$) and, therefore, we can also rewrite the above inequality as $Tr(u(P)\rho)>Tr(u(Q)\rho)$.  On the other hand, the trace inner product is the usual way in quantum mechanics of computing the expectation with respect to the density matrix $\rho$.

\section{Related work}
Axiomatic frameworks for the theory of subjective expected utility were originally given by \cite{savage1954foundations} and by~\cite{anscombe1963}. 
\cite{karni2013axiomatic} provides a  recent overview of several  variations and extensions of these two models. 

\citeauthor{busemeyer2012quantum}'s~\citeyearpar{busemeyer2012quantum} book presents an overview of quantum-like approaches to cognition and decision theory.  \cite{deutsch1999quantum}, \cite{khrennikov2016quantum} and \cite{danilov:halshs-01324046} are examples of other works addressing  similar issues. 
In particular the latter proposes an axiomatisation for quantum preferences directly in the space of Hermitian matrices similar to the one presented here. However, the authors only consider what we call \emph{simple} quantum horse lotteries. In doing so, the traditional axiom of mixture independence is formulated relative to the particular orthogonal decomposition associated to a simple lottery; an additional axiom becomes then necessary to bind lotteries based on different orthogonal decompositions. Moreover their representation theorem crucially employs (original) Gleason's theorem and therefore only works on spaces of dimension at least three. Because of those characteristics, it is unclear to us whether or not that axiomatisation is coherent: e.g., whether it guarantees that a subject whose quantum preferences on a space of dimension two cannot be made a sure loser, that is, shown to be irrational. The case of dimension two is particularly critical as dispersion-free probabilities---which \cite{BenavoliTODO} have shown to be incoherent---could in principle be employed.


\section{Concluding remarks}
In this paper, we have axiomatised rational preferences over quantum horse lotteries. Such a development is a natural follow up of our recent work \citep{benavoli2016d}, which has shown that Quantum  Mechanics is the Bayesian theory of probability over Hermitian matrices. By bridging those rational preferences with quantum desirability, we have given a representation theorem in terms of quantum probabilities and utilities. 


There are many directions that can be explored starting from in this paper. Two of them are particularly  important in our view. The first regards the full extension of our setting to partial (i.e., incomplete) preferences; this would enable it to deal with sets of quantum probabilities and utilities. Our axiomatisation is actually already conceived for incomplete preferences. This should be clear from the equivalence between preference and desirability. However, giving a full account of partial preferences requires detailing a few questions. The second is the definition of horse lotteries in their full generality as compound quantum lotteries: that is, such that the simple lotteries (over prizes) they embed be quantum too rather than classical. We leave these topics for future work.




\appendix

\section{Propositions and proofs}\label{sec:app}

 \subsection{Results of Section \ref{sec:QLandPR}}

 \begin{proposition}\label{prop:simpleQL}
 Let us consider the matrix $Q$ in $\D \otimes \CH$ defined as $Q=\sum_{j=1}^n q_j \otimes V_j$, with
 $q_j \in \PO$ and $V=\{V_j\}_{j=1}^n$ is an orthogonal decomposition (OD) on $\CH$, then $Q\in \QL$. 
 We call it a \emph{simple QH-lottery}.
\end{proposition}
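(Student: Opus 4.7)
The plan is to verify the defining condition of Definition \ref{def:QL} directly for the matrix $Q=\sum_{j=1}^n q_j\otimes V_j$: for every projector decomposition $\Pi=\{\Pi_i\}_{i=1}^n$ and every $\Pi_i\in\Pi$, I will exhibit a $p_i\in\PO$ such that $(I_m\otimes\Pi_i)\,Q\,(I_m\otimes\Pi_i)=p_i\otimes\Pi_i$.

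First I would record the relevant consequence of the projector definition given in the paper. Since $\Pi$ consists of $n$ mutually orthogonal self-adjoint idempotents in $\CH$ that sum to the identity $I_n$, we have $\sum_i \operatorname{Tr}(\Pi_i)=\operatorname{Tr}(I_n)=n$, and since each $\operatorname{Tr}(\Pi_i)=\operatorname{rank}(\Pi_i)$ is a nonnegative integer, every nonzero $\Pi_i$ is necessarily rank one (the degenerate case $\Pi_i=0$ makes the required identity trivially true with $p_i$ arbitrary). For a rank-one projector one has $\Pi_i A \Pi_i=\operatorname{Tr}(\Pi_i A)\,\Pi_i$ for any $A\in\CH$, which is the algebraic identity I will exploit throughout.

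Next I would carry out the explicit computation. By bilinearity of the tensor product and the mixed-product property,
\[
(I_m\otimes\Pi_i)\,Q\,(I_m\otimes\Pi_i)=\sum_{j=1}^n q_j\otimes (\Pi_i V_j \Pi_i)=\sum_{j=1}^n \operatorname{Tr}(\Pi_i V_j)\,q_j\otimes \Pi_i=\Bigl(\sum_{j=1}^n \operatorname{Tr}(\Pi_i V_j)\,q_j\Bigr)\otimes \Pi_i,
\]
where the second equality uses the rank-one identity above on each $V_j$. This identifies the only possible candidate $p_i:=\sum_j \operatorname{Tr}(\Pi_i V_j)\,q_j$.

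It remains to check $p_i\in\PO$, i.e.\ that $p_i$ is diagonal, entrywise nonnegative, and of trace one. Diagonality is immediate from diagonality of each $q_j$. The coefficients $\operatorname{Tr}(\Pi_i V_j)$ are nonnegative because $\Pi_i$ and $V_j$ are both positive semi-definite, and the entries of each $q_j\in\PO$ are nonnegative; hence $p_i$ has nonnegative entries. For the trace, using $\operatorname{Tr}(q_j)=1$ and $\sum_j V_j = I_n$ (as $V=\{V_j\}$ is an orthogonal decomposition),
\[
\operatorname{Tr}(p_i)=\sum_{j=1}^n \operatorname{Tr}(\Pi_i V_j)\operatorname{Tr}(q_j)=\operatorname{Tr}\Bigl(\Pi_i\sum_{j=1}^n V_j\Bigr)=\operatorname{Tr}(\Pi_i)=1.
\]
Since the projector decomposition $\Pi$ and the index $i$ were arbitrary, $Q$ satisfies \eqref{eq:Qlottery} and therefore lies in $\QL$. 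There is no genuine obstacle in the argument; the only substantive point is the rank-one observation, which is what lets the tensor factor $\Pi_i V_j \Pi_i$ collapse to a scalar multiple of $\Pi_i$ and thus produce the required product form $p_i\otimes\Pi_i$.
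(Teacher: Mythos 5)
Your proof is correct and follows essentially the same route as the paper's: expand $(I_m\otimes\Pi_i)Q(I_m\otimes\Pi_i)=\sum_{j=1}^n q_j\otimes\Pi_i V_j\Pi_i$, collapse each $\Pi_i V_j\Pi_i$ to a nonnegative scalar times $\Pi_i$ (your $\operatorname{Tr}(\Pi_i V_j)$ is exactly the paper's $\gamma_j=\pi_i^\dagger v_jv_j^\dagger\pi_i$ with $\Pi_i=\pi_i\pi_i^\dagger$), and conclude that $p_i=\sum_j\operatorname{Tr}(\Pi_i V_j)\,q_j$ lies in $\PO$ from positivity and $\sum_j V_j=I_n$. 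The only quibble is your parenthetical on degenerate projectors: the paper's convention (measurement along $n$ orthogonal directions, $\Pi_i=\pi_i\pi_i^\dagger$) already makes every $\Pi_i$ rank one, and if zero projectors were genuinely allowed then some nonzero $\Pi_k$ could have rank at least $2$, in which case both your rank-one deduction and the proposition itself (e.g.\ $\Pi=\{I_n,0,\dots,0\}$ with distinct $q_j$) would break, so that aside should be dropped rather than relied upon.
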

\begin{proof}
 The proof is immediate.  From the definition
 of QH-lottery
 \begin{align}
 \nonumber
  (I_m \otimes \Pi_i) Q (I_m \otimes \Pi_i)&=\sum_{j=1}^n q_j \otimes (\Pi_iV_j\Pi_i)=\sum_{j=1}^n q_j \otimes \gamma_j \Pi_i=\sum_{j=1}^n \gamma_j q_j \otimes \Pi_i=p_i \otimes \Pi_i,
 \end{align}
 with $p_i \in \PO $, where last equality follows by $\gamma_j=\pi_i^\dagger v_jv_j^\dagger \pi_i\geq0$ and $\sum_{j=1}^n \gamma_j=\pi_i^\dagger (\sum_{j=1}^n  v_jv_j^\dagger) \pi_i=\pi_i^\dagger I_n \pi_i=1$ and we have exploited that  $\Pi_i=\pi_i\pi_i^\dagger$ and $V_j=v_jv_j^\dagger$
 for $\pi_i,v_j\in \mathbb{C}^n$.
\end{proof}

\begin{proof}[Proof of Theorem \ref{thm:QL}]
First notice that we can rewrite $Q$ as  $ Q=\sum_{k=1}^m  e_ke^T_k\otimes Q_k$
 with $Q_k\in \CH$ and $e_k$ is the canonical basis for $\reals^m$. 
 Assume $Q \in \QL$.
 Consider an arbitrary OD $\Pi$, and $\Pi_i \in \Pi$. Hence, we have that
  \begin{align}
 \label{eq:hlotteryM}
  (I_m \otimes \Pi_i) Q (I_m \otimes \Pi_i)&= \sum_{k=1}^m   (I_m \otimes \Pi_i)(e_ke^T_k\otimes Q_k)(I_m \otimes \Pi_i)= \sum_{k=1}^m   e_ke^T_k\otimes \Pi_iQ_k\Pi_i.
 \end{align}
 To satisfy Property \eqref{eq:Qlottery} in the definition of QH-lotteries,  we want last term to be equal of $p_i \otimes \Pi_i$ for some $p_i \in \PO$.
 This implies that: (1) $\Pi_i Q_k \Pi_i\geq0 $; (2) 
  $\sum_{k=1}^m \Pi_iQ_k\Pi_i=\sum_{k=1}^m p_i(k)\Pi_i=1\Pi_i$
  Given the arbitrariness of  $\Pi$, this means that $Q_k$ is positive semi-definite
  and  $\sum_{k=1}^m Q_k=I_n$. In fact assume that $Q_k$ has some negative eigenvalue with corresponding
  projector $V$. Then take $\Pi_i=V$ so that $\Pi_i Q_k \Pi_i<0$ a contradiction.
  Conversely assume that  $Q_k\geq0$ for every $k$ and $\Pi_iQ_k\Pi_i=p_i(k)\leq1$ for $k=1,\dots,m-1$
  and $\sum_{k=1}^{m-1}p_i(k)\leq 1$ (otherwise it could not return a probability), then 
  $\Pi_iQ_m\Pi_i=(1-\sum_{k=1}^{m-1}p_i(k))\Pi_i=\Pi_i(I- \sum_{k=1}^{m-1}Q_k)\Pi_i$.
  That implies that $\Pi_i(I- \sum_{k=1}^{m}Q_k)\Pi_i=0$ that ends the proof of one direction of the Theorem.
   The other direction is trivial.  
\end{proof}

\begin{proposition}\label{prop:vectspace}
The set $\mathcal{A}$ constitutes a vector space. 
\end{proposition}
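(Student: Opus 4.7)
The plan is to verify that $\mathcal{A}$ is closed under the three operations that make a subset of the ambient space $\D\otimes\CH$ a vector space: containing $0$, closed under scalar multiplication (including by negative scalars and by $0$), and closed under addition. Since $\mathcal{A}\subseteq\D\otimes\CH$, associativity, commutativity and the other axioms are automatic once closure is established.

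For the trivial parts, I would start by observing that $\QL$ is non-empty (for instance $Z=p_z\otimes I_n\in\QL$), so taking any $P\in\QL$ yields $0=1\cdot(P-P)\in\mathcal{A}$. Scalar multiplication by any $\mu>0$ is direct: $\mu\cdot\lambda(P-Q)=(\mu\lambda)(P-Q)\in\mathcal{A}$. For $\mu<0$, I rewrite
\[
\mu\cdot\lambda(P-Q)=|\mu|\lambda\,(Q-P),
\]
which lies in $\mathcal{A}$ because $P,Q$ simply swap roles. Together with the case $\mu=0$ handled by the previous paragraph, this covers all scalar multiplications.

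The main step is closure under addition. Given two elements $\lambda_1(P_1-Q_1)$ and $\lambda_2(P_2-Q_2)$ of $\mathcal{A}$, set $\mu:=\lambda_1+\lambda_2>0$ and define
\[
P:=\tfrac{\lambda_1}{\mu}P_1+\tfrac{\lambda_2}{\mu}P_2,\qquad Q:=\tfrac{\lambda_1}{\mu}Q_1+\tfrac{\lambda_2}{\mu}Q_2.
\]
I would then check that $P,Q\in\QL$ using Theorem~\ref{thm:QL}: each $P_i,Q_i$ is block-diagonal with positive semidefinite blocks summing to $I_n$; convex combinations preserve positive semidefiniteness of each block and preserve the identity constraint $\sum_{j=1}^m (\tfrac{\lambda_1}{\mu}P_{1,j}+\tfrac{\lambda_2}{\mu}P_{2,j})=\tfrac{\lambda_1}{\mu}I_n+\tfrac{\lambda_2}{\mu}I_n=I_n$. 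Hence $P,Q\in\QL$, and a direct computation gives
\[
\lambda_1(P_1-Q_1)+\lambda_2(P_2-Q_2)=\mu(P-Q)\in\mathcal{A}.
\]

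The only step that requires a bit of care is checking that $\QL$ is convex, but this is immediate from the characterisation in Theorem~\ref{thm:QL}; once that is in hand the rest is bookkeeping. No deeper obstacle arises because scaling freely by arbitrary positive $\lambda$'s in the definition of $\mathcal{A}$ is exactly what lets us absorb the normalisation factor $1/\mu$ when combining two elements.
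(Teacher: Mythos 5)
Your proof is correct and follows essentially the same route as the paper: the key step is closure under addition, handled by forming the convex combination with weights $\lambda_1/(\lambda_1+\lambda_2)$ and $\lambda_2/(\lambda_1+\lambda_2)$ (which stays in $\QL$) and rescaling by $\lambda_1+\lambda_2$, exactly as in the paper's argument. You additionally spell out the zero element and closure under (possibly negative or zero) scalars, which the paper treats as immediate.
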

\begin{proof}
The only thing to verify is closure under addition. Hence let $\lambda(P-Q), \mu(R-S) \in \mathcal{A}$. Let $\alpha=\frac{\lambda}{\lambda + \mu}$ and $\beta =\frac{\lambda}{\alpha}$ . Notice that $\alpha P +(1-\alpha) R, \alpha Q + (1-\alpha) S \in \QL$. Hence 
$\beta((\alpha P +(1-\alpha) R)  - (\alpha Q + (1-\alpha) S))= \lambda(P-Q) + \mu(R-S) \in \mathcal{A}$.
\end{proof}

\begin{proof}[Proof of Theorem \ref{thm:C-convex-cone}]
We show that the map
\begin{equation} \succ \; \mapsto \; \cone =\{\lambda(P-Q) \mid  P,Q \in \QL, ~\lambda >0, P\succ Q\}\end{equation}
determines a bijection between non-empty coherent preference relations over $\QL$ and non-empty non-pointed convex cones in $\mathcal{A}$.

Let $\cone$ be a non-pointed convex cone in $\mathcal{A}$, and define $P \succ Q$ if and only if $\lambda(P-Q) \in \cone$, for some $\lambda>0$. Since $\cone$ is a cone, $P \succ Q$ if and only if $\lambda(P-Q) \in \cone$, for every $\lambda>0$. From non-pointedness of $\cone$, $\succ$ is irreflexive. Transitivity is immediate. Mixture independence follows form the fact that $\alpha(P-Q)=(\alpha P + (1-\alpha) R) - (\alpha Q + (1-\alpha) R)$.
For the other direction, the fact that $\cone$ is non-pointed follows from  non reflexivity of $\succ$. For the other properties, scaling is immediate. For addition, 
let $\lambda(P-Q), \mu(R-S) \in \cone$. This means that $P \succ Q$ and $R \succ S$. Define  $\alpha$ and $\beta$ as in Proposition \ref{prop:vectspace}. By mixture independence $ \alpha P + (1-\alpha R) \succ \alpha Q + (1-\alpha R) $ and $ \alpha Q + (1-\alpha R) \succ \alpha Q + (1-\alpha S) $. By transitivity $ \alpha P + (1-\alpha R) \succ  \alpha Q + (1-\alpha S) $, and therefore $ (\alpha P + (1-\alpha R)) -  (\alpha Q + (1-\alpha S)) \in \cone$. By scaling $\beta((\alpha P +(1-\alpha) R)  - (\alpha Q + (1-\alpha) S))= \lambda(P-Q) + \mu(R-S) \in \cone$.
\end{proof}

\subsection{Results of Section \ref{sec:desirablegambles}}

\begin{proposition}\label{prop:rdesirs}
Let $\succ$ be a coherent preference relation on $\QL\times\QL$ and $\rdesirs$ be defined by
\begin{equation}
\rdesirs = \{\lambda\, \mathsf{Proj}(P-Q):P\succ
Q,\lambda>0\}.\label{eq:RfromC0}
\end{equation}
Then:
\begin{itemize}
\item[(i)] $\rdesirs$ is a coherent set of desirable gambles on $\DP \otimes \CH$.
\item[(ii)] $P \succ Q\iff\mathsf{Proj}(P-Q)\in\rdesirs.$
\end{itemize}
\end{proposition}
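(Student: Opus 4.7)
The plan is to leverage the bijection in Theorem \ref{thm:C-convex-cone} between coherent preference relations and non-pointed convex cones in $\mathcal{A}$, combined with a careful analysis of the projection map. Setting $\cone = \{\lambda(P-Q) : \lambda > 0,\ P \succ Q\}$, the definition of $\rdesirs$ reads simply $\rdesirs = \mathsf{Proj}(\cone)$. Since $\mathsf{Proj}$ is linear, $\rdesirs$ is automatically a convex cone in $\DP \otimes \CH$.

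To finish (i), I would verify non-pointedness and (S2). For non-pointedness, I first establish that $\mathsf{Proj}$ is injective on $\mathcal{A}$: if $a = \lambda(P-Q) \in \mathcal{A}$ satisfies $\mathsf{Proj}(a)=0$, then the first $m-1$ diagonal blocks of $P-Q$ vanish, but Theorem \ref{thm:QL} forces $\sum_j P_j = \sum_j Q_j = I_n$, so the $m$-th block of $P-Q$ also vanishes, giving $a = 0$. Since $0 \notin \cone$ by non-pointedness of the preference cone, it follows that $0 \notin \rdesirs$. For (S2), I invoke the worst-outcome axiom. Given $G = \diag(G_1,\dots,G_{m-1}) \gneq 0$, I pick $\lambda > 0$ small enough that $\lambda \sum_{j<m} G_j \leq I_n$ and set $P := \diag(\lambda G_1, \dots, \lambda G_{m-1},\, I_n - \lambda\sum_{j<m} G_j)$. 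By Theorem \ref{thm:QL}, $P \in \QL$; since $G \neq 0$ we have $P \neq Z$, hence $P \succ Z$. Because $Z = p_z \otimes I_n$ is supported only on the $m$-th coordinate, $\mathsf{Proj}(Z) = 0$, and therefore $\mathsf{Proj}(P - Z) = \lambda G \in \rdesirs$. The cone property then yields $G \in \rdesirs$.

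Part (ii) is essentially a corollary of the above. The forward implication is immediate from the definition of $\rdesirs$. For the converse, suppose $\mathsf{Proj}(P-Q) \in \rdesirs$. Then there exist $R \succ S$ in $\QL$ and $\lambda > 0$ with $\mathsf{Proj}(P-Q) = \mathsf{Proj}(\lambda(R-S))$. Both $P-Q$ and $\lambda(R-S)$ lie in $\mathcal{A}$ (a vector space by Proposition \ref{prop:vectspace}), so injectivity of $\mathsf{Proj}|_\mathcal{A}$ forces $P - Q = \lambda(R-S) \in \cone$. By the inverse map in Theorem \ref{thm:C-convex-cone}, this is equivalent to $P \succ Q$.

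The main obstacle is (S2): one must ensure that the worst-outcome axiom alone is strong enough to witness every positive semi-definite $G$ on the projected space as a desirable gamble. The crux is the observation that $\mathsf{Proj}(Z)=0$ together with the trace constraint $\sum_j P_j = I_n$ from Theorem \ref{thm:QL}: these let one realise any sufficiently small positive rescaling of $G$ as $\mathsf{Proj}(P-Z)$ for a legitimate QH-lottery $P \neq Z$, thereby transferring the objective preference $P \succ Z$ into membership of $G$ in $\rdesirs$.
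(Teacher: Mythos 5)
Your proposal is correct and follows essentially the same route as the paper: it identifies $\rdesirs=\mathsf{Proj}(\cone)$, uses linearity of $\mathsf{Proj}$ and Theorem \ref{thm:C-convex-cone} for the cone structure, and witnesses accepting partial gain via the worst-outcome lottery $Z$ with a rescaled lottery $P=\diag(\lambda G_1,\dots,\lambda G_{m-1}, I_n-\lambda\sum_{j<m}G_j)$, exactly as in the paper's construction (which scales by the sum of maximal eigenvalues and invokes Weyl's inequality). The only differences are cosmetic: you spell out the injectivity of $\mathsf{Proj}$ restricted to $\mathcal{A}$ (which the paper only asserts in the main text) and you make explicit the argument for part (ii) that the paper dismisses as trivial.
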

\begin{proof}
The proof of the second point is trivial, so we concentrate on showing that $\rdesirs$ satisfies the properties of Definition \ref{def:sdg}.
We consider the set $\mathcal{A}$ of scaled differences of QL-lotteries.
Hence, $\rdesirs=\Proj(\cone)$, and we can use the fact that $\cone$ is a non-pointed convex cone to assess the desired properties of $\rdesirs$.

\begin{itemize}
\item (accepting partial gain) 
Consider a PSDNZ block-diagonal matrix $G\in \DP\times\CH$ and  let $\lambda= \sum_j \max \mathsf{EigenValue}(G_j)$. Note that $\lambda>0$ otherwise $G$ would not be PSDNZ. Then let 
\begin{equation*}
P=\text{Diag}\left(G_1/\lambda,\dots,G_{m-1}/\lambda,I_n-\sum_{j} G_j/\lambda\right),
\end{equation*}
and  $I_n-\sum_{j} G_j/\lambda\geq0$ by  Weyl's inequality. Hence, $P \in \QL$ by Theorem \ref{thm:QL}.
It follows that  $\mathsf{Proj}(\lambda(P-Z))=G$ and since $P\neq Z$ because
$G$ is (PSD)NZ, then $P\succ Z$ and hence $G\in\rdesirs$.

\item Positive homogeneity follows trivially from the convexity of $\cone$.

\item That $\rdesirs$ is a cone follows from the fact that $\cone$ is a cone and by taking into consideration that $\mathsf{Proj}$ is a linear
functional. Moreover  $\cone$ does not contain the origin and hence  $0\notin\rdesirs$.
\end{itemize}
\end{proof}

\begin{proof}[Proof of Theorem \ref{thm:1to1}]
As before, we show instead that there is a one to one correspondence between coherent sets of desirable gambles and non-pointed convex cones on $\mathcal{A}$.
We already know that $\Proj_2$ is injective. We thence verify that $\Proj_2$ is also surjective. Let $W=\text{Diag}\left(W_1,\dots,W_{m-1}\right) \in \DP \otimes \CH$. For each $j<m$, consider $W_j=\sum_{i=1}^{n} \gamma_i \Pi_i$. Write any eigenvalue $\gamma_i$ of $W_j$ as $\gamma_i=\gamma^+_i-\gamma^-_i$
where $\gamma^+_i=\max(0,\gamma_i)$ and $\gamma^-_i=\max(0,-\gamma_i)$.
Then we have that 
$W_j=\sum_{i=1}^{n} \gamma^+_i \Pi_i-\sum_{i=1}^{n} \gamma^-_i \Pi_i=W^+_j-W^-_j$.
Let $\lambda^+=(m-1)\max_j(Eigen(W^+_j))$ and $\lambda^-=(m-1)\max_j(Eigen(W^-_j))$. Define
$\lambda=\lambda^++\lambda^- > 0$. Consider
\[P=\text{Diag}\left(W^+_1/\lambda,\dots,W^+_{m-1}/\lambda,I-\sum_{j}W^+_{j}/\lambda\right)\]  and 
\[Q=\text{Diag}\left(W^-_1/\lambda,\dots,W^-_{m-1}/\lambda,I-\sum_{j}W^-_{j}/\lambda\right).\]
By construction, $W^+_\ell/\lambda,W^-_\ell/\lambda\geq0$, for $\ell<m$. By Weyl's inequality, $I-\sum_{j}W^+_{j}/\lambda, I-\sum_{j}W^-_{j}/\lambda\geq0$. Hence,  $P,Q \in \QL$ by Theorem \ref{thm:QL}. We conclude by noticing that $W=\mathsf{Proj}(\lambda(P-Q))$.
\end{proof}

\subsection{Results of Section \ref{sec:objective}}

 If a coherent preference relation $\succ$ also satisfies the next axiom, we say that it is Archimedean.
 \begin{description}
\item[(fA.3)] $(\forall P, Q, R \in \QL) P \succ Q \succ R \Rightarrow (\exists \alpha, \beta \in (0,1)) : \alpha P + (1-\alpha) R \succ Q \succ \beta P + (1-\beta) R$ [\emph{Archimedeanity}]
\end{description} 

\begin{proposition}\label{prop:archilimit}
An objective preference relation $\rhd$ is Archimedean if and only if $m=2$ and $n=1$.
\end{proposition}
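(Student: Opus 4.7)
The plan is to prove each direction separately, and the harder direction by contrapositive.

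($\Leftarrow$) Suppose $m=2$ and $n=1$. Every $Q\in\QL$ is then of the form $\mathrm{Diag}(q,1-q)$ with $q\in[0,1]$, and $\mathsf{Proj}(Q)=q$ is a scalar. Since for a real scalar $x$ one has $x\gneq 0\iff x>0$, the relation $\rhd$ reduces to the usual strict order on $[0,1]$. Archimedeanity is then a one-line check: given $P,Q,R$ with corresponding scalars $p>q>r$, set $t=(q-r)/(p-r)\in(0,1)$; any $\alpha\in(t,1)$ satisfies $\alpha p+(1-\alpha)r>q$, giving $\alpha P+(1-\alpha)R\rhd Q$, and any $\beta\in(0,t)$ analogously yields $Q\rhd \beta P+(1-\beta)R$.

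($\Rightarrow$) I prove the contrapositive: if \emph{not} ($m=2$ and $n=1$), then Archimedeanity fails. The hypothesis splits into two cases, $m\geq 3$ (with any $n$) and $m=2$ with $n\geq 2$, and in each I produce an explicit triple $P\rhd Q\rhd R$ violating~(fA.3). The guiding idea is to place the strict positivity of $\mathsf{Proj}(P-Q)$ and of $\mathsf{Proj}(Q-R)$ in disjoint block or eigenspace directions, so that any convex combination $\alpha P+(1-\alpha)R$ inherits the ``deficit'' of $R$ in the direction where $Q\rhd R$ is strict, making $\mathsf{Proj}(\alpha P+(1-\alpha)R-Q)$ non-PSD for every $\alpha\in(0,1)$.

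Concretely, for $m\geq 3$ I take (writing out the $m$ diagonal blocks, each in $\CH$)
\[
P=\mathrm{Diag}\bigl(\tfrac12 I_n,\tfrac12 I_n,0,\dots,0\bigr),\
Q=\mathrm{Diag}\bigl(0,\tfrac12 I_n,\tfrac12 I_n,0,\dots,0\bigr),\
R=\mathrm{Diag}\bigl(0,0,I_n,0,\dots,0\bigr),
\]
which are QH-lotteries by Theorem~\ref{thm:QL}. Then $\mathsf{Proj}(P-Q)=\mathrm{Diag}(\tfrac12 I_n,0,\dots)\gneq 0$ and $\mathsf{Proj}(Q-R)=\mathrm{Diag}(0,\tfrac12 I_n,0,\dots)\gneq 0$, but $\mathsf{Proj}(\alpha P+(1-\alpha)R-Q)$ has second block $(\alpha-1)\tfrac12 I_n$, which is non-PSD for any $\alpha\in(0,1)$. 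For the case $m=2$, $n\geq 2$, I embed the same idea in the eigenspace direction: take $P_1=\tfrac12 I_n$, $Q_1=\mathrm{Diag}(0,\tfrac12,0,\dots,0)$, $R_1=0$, and complete each $Q\in\QL$ with $Q_2=I_n-Q_1$. Again $\mathsf{Proj}(P-Q)\gneq 0$ and $\mathsf{Proj}(Q-R)\gneq 0$, while $\mathsf{Proj}(\alpha P+(1-\alpha)R-Q)=\mathrm{Diag}(\alpha/2,(\alpha-1)/2,\alpha/2,\dots,\alpha/2)$ has a negative eigenvalue for all $\alpha\in(0,1)$.

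The only real obstacle is spotting the right geometric pattern: the counterexample must align the strict positivity of $P-Q$ with a \emph{zero} direction of $Q-R$ (so that $R$ gives no help at all where the strict inequality must be maintained). Once the disjoint-support construction is in hand, verifying the QH-lottery constraints via Theorem~\ref{thm:QL} and the PSD/non-PSD claims are routine linear algebra. The two cases together exhaust the possibilities excluded by the hypothesis, so Archimedeanity forces $m=2$ and $n=1$.
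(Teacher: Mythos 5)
Your $\Leftarrow$ direction and your $m=2$, $n\geq 2$ counterexample are correct, but the $m\geq 3$ construction has a genuine gap: it only works for $m=3$. The operator $\Proj$ discards \emph{only} the last ($m$-th, i.e.\ $z$-) block, so for $m\geq 4$ the third block is retained. With your choice, $P-Q=\mathrm{Diag}\bigl(\tfrac12 I_n,0,-\tfrac12 I_n,0,\dots,0\bigr)$ and $Q-R=\mathrm{Diag}\bigl(0,\tfrac12 I_n,-\tfrac12 I_n,0,\dots,0\bigr)$, hence for $m\geq 4$ both $\Proj(P-Q)$ and $\Proj(Q-R)$ contain the block $-\tfrac12 I_n$ and are not positive semi-definite. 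So the premise $P\rhd Q\rhd R$ of (fA.3) fails and the triple is not a counterexample at all; your claim that $\Proj(P-Q)=\mathrm{Diag}(\tfrac12 I_n,0,\dots)\gneq 0$ silently drops that negative block. Your guiding idea (disjoint supports for the two strict comparisons) is right, but the balancing negative mass must be hidden in the $z$-block, which is the only block $\Proj$ throws away.

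The repair is immediate and brings you to essentially the paper's argument: keep $P$ as is, but take $Q=\mathrm{Diag}\bigl(0,\tfrac12 I_n,0,\dots,0,\tfrac12 I_n\bigr)$ (the second half of the mass on the $m$-th block) and $R=Z=\mathrm{Diag}(0,\dots,0,I_n)$. Then $\Proj(P-Q)=\mathrm{Diag}(\tfrac12 I_n,0,\dots,0)\gneq 0$ and $\Proj(Q-R)=\mathrm{Diag}(0,\tfrac12 I_n,0,\dots,0)\gneq 0$, while $\Proj\bigl(\alpha P+(1-\alpha)R-Q\bigr)$ still has second block $\tfrac{\alpha-1}{2}I_n$, negative for every $\alpha\in(0,1)$, so (fA.3) fails for all $m\geq 3$. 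This is exactly the paper's trick in its $m\geq3$ case: it perturbs the uniform lottery by $+\epsilon$ on a retained prize and $-\epsilon$ on the prize $z$, and uses $R=Z$, so that all negativity of the differences sits in the discarded block. Your $m=2$, $n\geq2$ case already follows this pattern (there $R=Z$ and the negativity of $P-Q$ lives in the second, discarded, block), and with the above fix the case split is exhaustive and the proof goes through.
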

\begin{proof}
We begin with the direct implication. Assume first of all that
$m\geq3$.  Fix a OD $\Pi$, and let $U=\sum_{i=1}^n u\otimes \Pi_i$, where $u(\ell) = \frac{1}{m}$. 
Then
fix any $k\neq m$ and $\eps>0$ small enough so as to define lottery $P=\sum_{i=1}^np\otimes \Pi_i$
where
\begin{equation*}
p(\ell)=
\begin{cases}
u(\ell)+\eps\quad&\text{ if }\ell=k,\\
u(\ell)-\eps\quad&\text{ if }\ell=m,\\
u(\ell)\quad&\text{ otherwise.}
\end{cases}
\end{equation*}
It follows that $P\succ U$ since $P\rhd U$. Given that also $U\succ
Z$, we can apply the Archimedean axiom to obtain that there
is a $\beta\in(0,1)$ such that:
\begin{equation*}
\beta P + (1-\beta)Z \succ U.
\end{equation*}
But $\beta P+ (1-\beta)Z \not\vartriangleright U$. Indeed, for any $\ell\neq k, m$, it holds that $\beta p(\ell)
+ (1-\beta)p_z(\ell) - u(\ell)=(\beta-1)\frac{1}{m}<0$.

Next, if $m=2$ and $n\geq 2$, and fix an OD $\Pi$, and $\Pi_1 \in \Pi$. Define QL-lotteries $P= p \otimes \Pi_1 + \sum_{i=2}^n r_p \otimes \Pi_i$  and $Q= q \otimes \Pi_1 + \sum_{i=2}^n r_q \otimes \Pi_i$ where
$p(1)=q(2)=1, q(1)=p(2)=0, r_p(\ell)=r_q(\ell)=\frac{1}{2}$. Then $P\rhd Q \rhd Z$, whence $P\succ Q
\succ Z$. However, for any $\beta\in (0,1)$ it holds that
\begin{equation*}
 \beta r_p(1)+ (1-\beta) p_z(1) < r_q(1),
\end{equation*}
whence $\beta P +(1-\beta) Z\not\vartriangleright Q$ and as a consequence $\succ$
is not Archimedean.

Conversely, given $m=2$ and $n=1$, then it holds
that $P\rhd Q \iff p(1)>q(1)$, and then if $P\succ Q
\succ R$ there are $\alpha,\beta\in (0,1)$ such that $\alpha
p(1)+ (1-\alpha) r(1) > q(1) > \beta
p(1)+(1-\beta) q(1)$, meaning that $\alpha
P+(1-\alpha) R \succ Q \succ \beta P+(1-\beta) R$, and as a
consequence $\succ$ is Archimedean.
\end{proof}

By $\gambles^+$ we denote the set of all gambles $G$ in $ \DP \otimes \CH$ such that $G \gneq 0$.

\begin{lemma}\label{lem:trans}
Let $\rdesirs$ be the coherent set of desirable gambles on $\DP \otimes \CH$ arising from a coherent preference relation $\succ$. Denote by $U$ the uniform QL-lottery defined by $U=\frac{1}{m} \text{Diag}(I_{n},\dots,I_{n})$. Consider the following translated sets:
\begin{eqnarray*}
\tau(\rdesirs)&=&\{G\in\DP \otimes \CH \mid G=\mathsf{Proj}(U)+F, F\in\rdesirs\},\\
\tau(\gambles^+
)&=&\{G\in\DP \otimes \CH \mid G-\mathsf{Proj}(U)\gneq 0\},
\end{eqnarray*}
and the following version of (S3-openness) adapted to $\tau(\rdesirs)$:
\begin{description}
\item[(S3')] $g\in \tau(\rdesirs)\setminus \tau(\gambles^+)\Rightarrow(\exists\delta>0)G-\delta I_{n(m-1)}\in \tau(\rdesirs)$.
\end{description}
Then it follows that:
\begin{itemize}
\item[(i)] (S3) and (S3') are equivalent conditions.
\item[(ii)] $\tau(\rdesirs)=\{\mathsf{Proj}(U)+\lambda\mathsf{Proj}(P-U):\lambda>0,P\succ U\}$.
\end{itemize}
\end{lemma}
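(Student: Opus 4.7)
The plan is to exploit the translation structure: since $\tau$ is just a shift by the fixed element $\mathsf{Proj}(U)$, both parts should follow from manipulating the cone $\cone$ underlying $\succ$ and the description $\rdesirs = \mathsf{Proj}(\cone)$ from Proposition~\ref{prop:rdesirs}, rather than from any new machinery. I would take (i) first, since it is essentially a change of variables, and then (ii), whose nontrivial content is the ability to rewrite an arbitrary scaled difference $\lambda(P-Q)$ as a scaled difference of the special form $\mu(P'-U)$.

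For (i), I would set $F=G-\mathsf{Proj}(U)$ and note that $G\in\tau(\rdesirs)$ iff $F\in\rdesirs$ and $G\in\tau(\gambles^+)$ iff $F\gneq 0$. Writing out (S3') on $G$ and renaming $\delta$ as $\epsilon$ then literally produces (S3) on $F$, and vice versa. This gives both implications immediately.

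For (ii), by Proposition~\ref{prop:rdesirs}, $\tau(\rdesirs)=\{\mathsf{Proj}(U)+\lambda\mathsf{Proj}(P-Q):\lambda>0,\;P\succ Q\}$, so the inclusion $\supseteq$ is trivial. For the converse, given $P\succ Q$ and $\lambda>0$, I want to produce $P'\succ U$ and $\mu>0$ with $\mu(P'-U)=\lambda(P-Q)$, i.e., $P'=U+c(P-Q)$ for $c=\lambda/\mu$. The key computation is to pick $c$ small enough so that (a) $P'\in\QL$ (the sum of blocks equals $I_n$ automatically; each block is PSD because $U_k=\frac{1}{m}I_n$ dominates $c(P_k-Q_k)$ for $c\le 1/m$, using $\|Q_k\|_{\mathrm{op}}\le 1$), and (b) the auxiliary element $R=U+\frac{c}{1-c}(U-Q)$ also lies in $\QL$ (same Weyl-type check). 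With such $R$, a direct algebraic check gives $P'=cP+(1-c)R$ and $U=cQ+(1-c)R$, so mixture independence (A.2) applied to $P\succ Q$ yields $P'\succ U$. Taking $\mu=\lambda/c$ then gives the desired representation, and applying $\mathsf{Proj}$ and shifting by $\mathsf{Proj}(U)$ finishes the identity.

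The only mildly delicate step is constructing $R$ and verifying both that $R\in\QL$ and that $P'$ and $U$ decompose compatibly so mixture independence applies; everything else is bookkeeping. I do not expect any real obstacle beyond being careful with the operator-norm bound on the $Q_k$ blocks (which is forced by Theorem~\ref{thm:QL} since $\sum_k Q_k=I_n$ implies $0\le Q_k\le I_n$), and this gives the explicit threshold $c\le 1/m$ that makes all the PSD conditions go through.
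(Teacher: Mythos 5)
Your proposal is correct and takes essentially the same route as the paper: part (i) is the same change of variables $F=G-\mathsf{Proj}(U)$, and in part (ii) your lottery $P'=U+c(P-Q)$ with $c\le 1/m$ is exactly the paper's construction $P=U+\mu\lambda(P_1-P_2)$ with $\mu\in(0,\tfrac{1}{\lambda m})$, verified PSD by the same Weyl-type bound. The only difference is that you justify $P'\succ U$ explicitly through the auxiliary lottery $R$ and mixture independence (A.2), a step the paper leaves as ``not difficult to verify'' (it also follows immediately from the cone correspondence of Theorem~\ref{thm:C-convex-cone}, since $P'-U$ is a positive multiple of $P-Q$).
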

\begin{proof}
Note first that if $G=\mathsf{Proj}(U)+F$, $G\in \tau(\rdesirs)$, then $F\in\rdesirs$: in fact if $G\in \tau(\rdesirs)$, then $G=\mathsf{Proj}(U)+H$ for some $H\in\rdesirs$, whence $F=H$.

\begin{itemize}

\item[(i)] Let us show that (S3) and (S3') are equivalent conditions.

Consider $G\in \tau(\rdesirs)\setminus
\tau(\gambles^+)$; then $G=\mathsf{Proj}(U)+F$,
$F\in\rdesirs\setminus\gambles^+$.
Applying (S3), there is $\delta>0$ such that
$F-\delta I_{n(m-1)}\in\rdesirs$, whence $\mathsf{Proj}(U)+F-\delta I_{n(m-1)}=G-\delta I_{n(m-1)}\in
\tau(\rdesirs)$.

Conversely, if $F\in\rdesirs\setminus\gambles^+$, then $\mathsf{Proj}(U)+F\in \tau(\rdesirs)\setminus \tau(\gambles^+)$; applying (S3') we get that there is $\delta>0$ such that $\mathsf{Proj}(U)+F-\delta I_{n(m-1)} \in \tau(\rdesirs)$, whence $F-\delta I_{n(m-1)}\in\rdesirs$.

\item[(ii)] 
It is trivial that $\{\mathsf{Proj}(U)+\lambda\mathsf{Proj}(P-U):\lambda>0,P\succ
U\}\subseteq \tau(\rdesirs)$, so we concentrate on the converse
inclusion.

Consider $\mathsf{Proj}(U)+F$. Then there are $\lambda>0$ and $P_1\succ P_2$ such that $F=\lambda\mathsf{Proj}(P_1-P_2)$. Remember that $m\geq2$ by assumption; whence, if we take $\mu\in(0,\frac{1}{\lambda m})$, we obtain that $U+\mu\lambda(P_1-P_2)\geq0$. Let $P_i= \text{Diag}(P^i_1,\dots,P^i_m)$, $i=1,2$. Notice that $P_1 - P_2= \text{Diag}(P^1_1-P^2_1,\dots,P^1_m - P^2_m)$ and therefore $\sum_{j=1}^m(P^1_j - P^2_j)=0$. Hence 
$\sum_{j=1}^m[\frac{1}{m}I_n+\mu\lambda(P_1-P_2)]=I_n$. We deduce that $P= U+\mu\lambda(P_1-P_2)\in\QL$.

Given that $P-U=\mu\lambda(P_1-P_2)$ and that $P_1\succ P_2$, by coherence of $\succ$ it is not difficult to verify that $P\succ U$. And since
$\mathsf{Proj}(U)+F=\mathsf{Proj}(U)+\frac{1}{\mu}\mathsf{Proj}(P-U)$, we deduce that
$\mathsf{Proj}(U)+F\in\{\mathsf{Proj}(U)+\lambda\mathsf{Proj}(P-U):\lambda>0,P\succ U\}$.
\end{itemize}
\end{proof}

\begin{proposition}\label{prop:lifeisgood}
$\rdesirs$ is a coherent set of strictly desirable gambles if an only if
$\succ$ is a coherent weakly Archimedean preference relation 
\end{proposition}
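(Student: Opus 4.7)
The plan is to invoke the bijection between coherent preference relations on $\QL \times \QL$ and coherent sets of desirable gambles on $\DP \otimes \CH$ given by Theorem~\ref{thm:1to1}, and to show that under this bijection the openness property (S3) on $\rdesirs$ corresponds exactly to weak Archimedeanity (A.3) on $\succ$. Throughout I will write $I := I_{(m-1)n}$ and use two easy observations that hold for any $P\in\QL$: first, $\Proj(Z)=0$, since $Z=\text{Diag}(0,\dots,0,I_n)$; second, $0\leq \Proj(P)\leq I$, because each diagonal block $P_j$ is PSD with $\sum_j P_j = I_n$.

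For the direction $(S3)\Rightarrow(A.3)$, I would start from $P\succ Q$ with $P\not\vartriangleright Q$, translate it into $G:=\Proj(P-Q)\in\rdesirs\setminus\gambles^+$, and apply (S3) to obtain some $\epsilon>0$ with $G-\epsilon I\in\rdesirs$; without loss of generality $\epsilon<1$, since if $G-\epsilon I\in\rdesirs$ then also $G-\epsilon' I\in\rdesirs$ for any $0<\epsilon'\leq \epsilon$ (the residue $(\epsilon-\epsilon')I$ lies in $\gambles^+$ and can be absorbed by (S1)+(S2)). I would then choose any $\alpha\in(1-\epsilon,1)\subseteq(0,1)$ and observe that
\[
\Proj\bigl(\alpha P+(1-\alpha) Z-Q\bigr) \;=\; (G-\epsilon I) + \bigl(\epsilon I-(1-\alpha)\Proj(P)\bigr),
\]
where the second summand is strictly positive definite because $(1-\alpha)\Proj(P)\leq (1-\alpha) I < \epsilon I$. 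Both summands lie in $\rdesirs$, so by additivity (S1) so does their sum, yielding $\alpha P+(1-\alpha)Z\succ Q$.

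The converse $(A.3)\Rightarrow(S3)$ requires a preliminary regularization trick. Given $G\in\rdesirs\setminus\gambles^+$, pick $\lambda>0$ and $P\succ Q$ with $G=\lambda\Proj(P-Q)$. A naive application of (A.3) would produce a shift by the possibly rank-deficient operator $\Proj(P)$, which in general cannot be bounded below by a positive multiple of $I$. To sidestep this I would first mix $P$ and $Q$ with the uniform QH-lottery $U:=\tfrac{1}{m}\text{Diag}(I_n,\dots,I_n)$, setting $P':=\tfrac{1}{2}P+\tfrac{1}{2}U$ and $Q':=\tfrac{1}{2}Q+\tfrac{1}{2}U$. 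Mixture independence (A.2) gives $P'\succ Q'$, and since $\Proj(P'-Q')=\tfrac{1}{2}\Proj(P-Q)$ is still not PSD-nonzero, also $P'\not\vartriangleright Q'$. Crucially, $\Proj(P')=\tfrac{1}{2}\Proj(P)+\tfrac{1}{2m}I\geq\tfrac{1}{2m}I$. Applying (A.3) to $(P',Q')$ produces $\alpha\in(0,1)$ with $\alpha P'+(1-\alpha)Z\succ Q'$; unwinding the $\Proj$ identity and scaling by $2\lambda$ yields
\[
G-\lambda(1-\alpha)\Proj(P)-\tfrac{\lambda(1-\alpha)}{m}I \;\in\; \rdesirs.
\]
Adding back the PSD gamble $\lambda(1-\alpha)\Proj(P)\geq 0$ via (S1) and (S2) then gives $G-\epsilon I\in\rdesirs$ for $\epsilon:=\tfrac{\lambda(1-\alpha)}{m}>0$, which is exactly (S3).

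The hard part is this converse direction: a priori weak Archimedeanity only promises a shift by the potentially singular operator $\Proj(P)$, whereas openness demands a shift by a strictly positive multiple of $I$. The uniform-mixing step injects the constant $\tfrac{1}{2m}I$ into the ``utility part'' of the lottery, and it is precisely this positive lower bound that survives the Archimedean shift and becomes the desired $\epsilon I$ after the final PSD reabsorption. Everything else is routine bookkeeping with the coherence axioms and the linearity of $\Proj$.
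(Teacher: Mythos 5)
Your proof is correct, and one half of it is exactly the paper's argument while the other half takes a somewhat different route. The direction (S3)$\Rightarrow$(A.3) coincides with the paper's: from $\Proj(P-Q)-\epsilon I\in\rdesirs$ you absorb the positive remainder $\epsilon I-(1-\alpha)\Proj(P)$, just as the paper picks $\beta$ with $(1-\beta)\Proj(P)\leq\delta I$ (your ``WLOG $\epsilon<1$'' is not needed; take $\alpha\in(\max\{0,1-\epsilon\},1)$). For (A.3)$\Rightarrow$(S3) the paper proceeds through Lemma~\ref{lem:trans}: it translates the cone by $\Proj(U)$, proves that openness of $\rdesirs$ is equivalent to an openness condition (S3') for the translated set, and shows every translated element can be written as $\Proj(U)+\lambda\Proj(P-U)$ with $P\succ U$, so that applying (A.3) to the pair $(P,U)$ automatically yields a shift by $(1-\beta)\Proj(U)=\tfrac{1-\beta}{m}I$. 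You instead keep the representing pair $P\succ Q$ of $G$, regularise by mixing both lotteries half-and-half with $U$ via (A.2) so that $\Proj(P')\geq\tfrac{1}{2m}I$, apply (A.3), and then reabsorb the residual PSD term $\lambda(1-\alpha)\Proj(P)$ using (S1)--(S2). The key insight is the same in both proofs---inject the uniform lottery so that the Archimedean shift contains a strictly positive multiple of the identity (this is also the engine of part (ii) of the paper's lemma, which constructs $P=U+\mu\lambda(P_1-P_2)\succ U$)---but your version bypasses the translated-set machinery entirely at the price of one extra reabsorption step, whereas the paper's lemma packages the idea so that the Archimedean axiom is only ever invoked against $U$. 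One tiny point worth a remark in your reabsorption step: $\Proj(P)=0$ would mean $P=Z$, which is incompatible with $P\succ Q$ under the worst-outcome assumption, so the term you add back is either PSDNZ (hence in $\rdesirs$ by (S2)) or the step is vacuous; either way the argument stands.
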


\begin{proof}
For the direction from left to right, we reason as follows.
Coherence follows from Proposition~\ref{prop:rdesirs}. In order to
prove that $\rdesirs$ is a set of strictly desirable gambles, it is
enough to show that (S3') holds, thanks to
Lemma~\ref{lem:trans}.

Consider $G\in \tau(\rdesirs)\setminus
\tau(\gambles^+)$. Then, according to Lemma~\ref{lem:trans}, $G=\mathsf{Proj}(U)+F$, with $F=\lambda\mathsf{Proj}(P-U)\in\rdesirs$, $\lambda>0$, and for some $P\in\QL$ such that $P\succ U, P\not\vartriangleright U$.

Since $U\succ Z$, it follows from (A3) that there is some
$\beta\in(0,1)$ such that $\beta P+(1-\beta)Z\succ U$, whence
$$\mathsf{Proj}(\beta (P-U)-(1-\beta)(U-Z))=\frac{\beta F}{\lambda}-(1-\beta)\mathsf{Proj}(U)\in\rdesirs.$$ Since $\rdesirs$ is a cone, this
means that $F-\frac{\lambda(1-\beta)}{\beta} \mathsf{Proj}(U) \in\rdesirs$ for
some $\beta\in (0,1)$, or, in other words, that $F-\delta I_{n(m-1)} \in\rdesirs$, with $\delta=\frac{\lambda(1-\beta)}{\beta m}>0$. We deduce that $G-\delta I_{n(m-1)}=\mathsf{Proj}(U)+F-\delta I_{n(m-1)}\in \tau(\rdesirs)$, whence (S3') holds and as a consequence $\rdesirs$ is a coherent
set of strictly desirable gambles.

Finally we verify that if $\rdesirs$ is a coherent set of strictly desirable gambles, then
$\succ$ is weakly Archimedean.
 Let us consider $P\succ Q,P\not\vartriangleright Q$.
By (S3), there is $\delta>0$ such that
$\mathsf{Proj}(P-Q)-\delta I_{n(m-1)}\in\rdesirs$. Choose $\beta\in(0,1)$ so that
$\mathsf{Proj}((1-\beta)P)-\delta I_{n(m-1)}\leq 0$. Then $\mathsf{Proj}(\beta P -
Q)- (\mathsf{Proj}(P-Q)-\delta I_{n(m-1)}) \geq 0$, whence $\mathsf{Proj}(\beta P-Q)\in\rdesirs$, which
implies that $\beta P + (1-\beta)Z\succ Q$. 
 \end{proof}
 
 \begin{proof}[Proof of Theorem \ref{thm:weak1to1}]
 The first point is an immediate Corollary of Theorem \ref{thm:1to1} and Proposition \ref{prop:lifeisgood}. For the second point it is enough to consider the fact that, by a standard argument, every weakly Archimedean coherent preference relation can be extended to a complete one, and that $\succ$ is included in $\succ'$ if and only if the SDG corresponding to $\succ$ is included in the SDG corresponding to $\succ'$.
 \end{proof}
 

\bibliographystyle{abbrvnat}
\bibliography{biblio_short}

\end{document}